\newcommand{\cE}{{\mathcal E}}
\newcommand{\cI}{{\mathcal I}}
\newcommand{\cS}{{\mathcal S}}
\newcommand{\cC}{{\mathcal C}}
\newcommand{\cL}{{\mathcal L}}
\newcommand{\cX}{{\mathcal X}}
\begin{document}
\title{Approximation Algorithms for Barrier Sweep Coverage
\thanks{A preliminary version of this paper appears in Proc. 6th International Conference on Communication System \& Networks (COMSNET'14)}
}

\author{Barun Gorain \and Partha Sarathi Mandal}
\institute{\small   Department of Mathematics\\Indian Institute of Technology Guwahati, India\\
\email{\{b.gorain,psm\}@iitg.ernet.in}}

\maketitle
\begin{abstract}
Time-varying coverage, namely sweep coverage is a recent development in the area of wireless sensor networks,
where a small number of mobile sensors sweep or monitor comparatively large number of locations periodically.
In this article we study barrier sweep coverage with mobile sensors where the barrier is considered as
a finite length continuous curve on a plane. The coverage at every point on the curve is time-variant.
We propose an optimal solution for sweep coverage of a finite length continuous curve.
Usually energy source of a mobile sensor is battery with limited power, so energy restricted sweep
coverage is a challenging problem for long running applications.
We propose an energy restricted sweep coverage problem where every mobile sensors must visit an energy source frequently to recharge or replace its battery. We propose a $\frac{13}{3}$-approximation algorithm for this problem.
The proposed algorithm for multiple curves achieves the best possible approximation factor 2 for
a special case. We propose a 5-approximation algorithm for the general problem.
As an application of the barrier sweep coverage problem for a set of line segments, we formulate
a data gathering problem. In this problem a set of mobile sensors is arbitrarily monitoring the line
segments one for each. A set of data mules periodically collects the monitoring data from the set of
mobile sensors. We prove that finding the minimum number of data mules to collect data periodically from
every mobile sensor is NP-hard and propose a 3-approximation algorithm to solve it.
\end{abstract}

\noindent {\bf Key words}: Barrier coverage, Sweep Coverage, Approximation Algorithm, Eulerian Graph, TSP,  Mobile Sensor,
Data Mule, Data Gathering, Wireless Sensor Networks.

\section{Introduction}

Coverage is a widely studied research area and one of the most important problem
in wireless sensor networks (WSNs) for monitoring, surveillance, etc.
Based on the subject to be covered by a set of sensors,
it is classified into three categories, such as point coverage, area coverage and
barrier coverage. In point coverage \cite{Gu09,LU05}, a set of points are covered,
whereas in area coverage \cite{CardeiM05,Saravi09,WangCP03}, all points inside a bounded
area are covered. But in barrier coverage, barriers of sensors are required for an
appropriate model of coverage for applications like detecting intruders
when they cross borders or detecting spread of pollutants, chemicals when sensors are
deployed around critical regions.
In most of barrier coverage literatures \cite{Chen07,Kumar05,Liu08,Saipulla09,YangQ09}
static sensors are usually used for continuous monitoring the borders or boundaries.
But there are applications \cite{Du2010}, where time-variant coverage at every point on
a boundary is sufficient instead of monitoring all along.
For these kind of applications, deployment of static sensors at the boundary is not
cost effective in terms of resource utilization for periodic monitoring every point
on the boundary. The time-variant coverage can be solved efficiently by utilizing less number of resources
i.e., mobile sensors with appropriate movement strategy.
The cost involvement aspect of this solution is mobility and storage capacity of the
mobile sensors.
This type of coverage problems, where time-variant coverage is sufficient for
periodic petrol inspections are termed as {\it sweep coverage}.
In point sweep coverage problem \cite{Hung10,Du2010,Gorain2014,Li11,Xi09}, a given set of discrete points are monitored by a set of mobile sensors at least once within a given period of time. This time period is termed as {\it sweep period}. The primary motivation of these sweep coverage problems is to utilize minimum number of mobile sensors for achieving the goal. But finding minimum number of mobile sensors for the sweep coverage of a given
set of discrete points on a plane is NP-hard \cite{Li11}. The area sweep coverage problem is
introduced in the article \cite{Gorain2014}, where it is proved that the problem is NP-complete.
In this article, we formulate different variation of barrier sweep coverage problems for covering finite length curves on a plane.

\subsection{Contribution}
In this paper, our contributions on sweep coverage problems are as follows:
\begin{itemize}
\item We introduce barrier sweep coverage problems for covering finite length curves on a plane.
We solve the problem optimally with minimum number of mobile sensors for a finite length curve.
\item We define an energy restricted barrier sweep coverage problem and propose a $\frac{13}{3}$-approximation algorithm for a finite length curve.
 \item  A $2$-approximation algorithm is proposed for solving the sweep coverage problem for multiple curves for a special case where each mobile sensor visits all points of each curve. For the general version of the problem, a 5-approximation algorithm is proposed.
 \item We formulate a data gathering problem by a set of data mules and a $3$-approximation algorithm is proposed to solve the NP-hard problem.
 \item Performance of the proposed algorithms are investigated through simulation for multiple finite length curves.
\end{itemize}

\subsection{Related Work}
The concept of sweep coverage initially came from the context of robotics \cite{Batalin02}.
In \cite{Batalin02}, the authors considered a dynamic sensor coverage problem using mobile robots in absence of global localization information.
The sensors are mobilized by mounting them on the mobile robots. The robots explore an unexplored area by deploying small communication beacons.
The robots decide direction of movements during the exploration using local markers with the beacons.
Recently, several works
\cite{Hung10,Du2010,Li11,ShuCZZ14,Chao11,Xi09,Yang13}  on sweep coverage are proposed in WSNs.
Most of the works focused on designing suitable heuristics.
In \cite{Xi09}, the authors considered a network consisting static and mobile sensors.
Two different problems are considered in that paper. In the first problem,
objective is to minimize number of mobile sensors that can guarantee data download from every static sensor in a given time period with high probability. In the second problem,
objective is to guide mobile sensors for moving towards static sensors without any centralized control.
In the first heuristic ({\it MinExpand}), mobile sensors move in the same path in every time period.  In the second heuristic ({\it OSweep}), the mobile sensors move in different
paths in different time periods. Hung et al. \cite{Hung10} considered a sweep coverage problem where
nonuniform deployment of a set of PoIs is made over an area of interest. The area is divided into
smaller sub-areas. Then mobile sensors are deployed over the sub-areas, one for each, to
guarantee sweep coverage of all the PoIs in the respective sub-areas. Due to unequal number of
PoIs in different sub-areas, sweep periods (patrolling times) of the mobile sensors may not be same.
Objective of the proposed heuristic is to make the patrolling time approximately same for all mobile sensors.
In \cite{ShuCZZ14}, the authors considered a problem where sweep periods of the PoIs are different. A scheme is proposed based on periodic vehicle routing problem to minimize number of unnecessary visits of a PoI by a mobile sensor. To extend lifetime of sweep coverage, Yang et al. \cite{Yang13}
utilized base station as a power source for periodical refueling or replacing battery of the mobile sensors.
The authors proposed two heuristics with one base station and multiple base stations,
respectively.
Hardnass of the sweep coverage problem is studied by Li et al. \cite{Li11} theoretically. The authors
proved that finding minimum number of mobile sensors to sweep cover a set of PoIs is NP-hard. It is proved that the problem cannot be approximated less than a factor of 2, unless P=NP.
A $(2+\epsilon)$-approximation and a 3-approximation algorithms are proposed to solve the problem.
The authors remarked on impossibility of design distributed local algorithm to guarantee sweep
coverage of all PoIs, i.e., a mobile sensor cannot locally determine whether all PoIs are sweep
covered without global information. But there is a flaw in the approximation algorithms.
We, in \cite{Gorainsss14}, remarked on the flaw of the 3-approximation algorithm \cite{Li11} and proposed corrected algorithm keeping same approximation factor to guarantee sweep coverage for the given set of PoIs. If the sweep periods of the PoIs are different, a $O(\log \rho)$-approximation algorithm we proposed where $\rho$ is the ratio of the maximum and minimum sweep periods. An inapproximability result is also established when the speed of the mobile sensors are not necessarily same.
In \cite{Gorain2014}, a
2-approximation algorithm for a special case of point sweep coverage problem we proposed, which is the best possible approximation factor according to the result in \cite{Li11}. A distributed 2-approximation algorithm we proposed for the point sweep coverage problem, where static sensors are deployed at the PoIs and those are considered as PoIs. The static sensors communicate among them self
through message to find initial deployment locations of the mobile sensors.
The area sweep coverage problem is formulated and proved that the problem is NP-complete.
A $2\sqrt{2}$ approximation algorithm is proposed to solve the problem for a square region. The area sweep coverage
problem for arbitrary bounded region is also investigated in that paper.
An energy efficient sweep coverage problem is proposed in \cite{Gorain015},
where a set of static and mobile sensors are used for sweep coverage of a set of discrete points. The objective is to minimize
the total energy consumption per unit time by the set of sensors guaranteeing the required sweep coverage. An 8-approximation
algorithm is proposed to solve the problem.

There are similar patrolling problems \cite{Czyzowicz2011,Dumitrescu2014,Kawamura2014} like sweep coverage with different objectives. Objective of the problems is to minimize time between two consecutive visits
of any point while monitoring a given road network or boundary of a region by a set of mobile agents
having different speeds. In \cite{Czyzowicz2011}, the authors proposed two strategies;  partition based strategy
and cycle based strategy to obtain movement schedules of the mobile agents. The authors proved that the strategy
obtains optimal solution when number of agents is less than or equal to 2 for partition strategy and
number of agents is less than or equal to 4 for cycle strategy. Kawamura et al. \cite{Kawamura2014}
proved that the partition strategy proposed in the paper \cite{Czyzowicz2011} achieves optimal solution
for number of agents less than or equal to 3.

The rest of the paper is organized as follows. The problem definitions are given in section \ref{sec:probDefination}.
Barrier sweep coverage for single curve and multiple curves are discussed in section \ref{sec:single} and section
\ref{sec:multiple} respectively. A data gathering problem by data mules is formulated and discussed in section \ref{sec:MDMDG}.
Finally, conclude and future works are given in section \ref{sec:concl}.

\section{Problem Definitions}\label{sec:probDefination}

Let $\cC$ be a finite length curve on a 2D plane. $\cC$ is said to be {\it covered} by a set of sensors
if and only if each point on $\cC$ is covered by at least one sensor. Based on the above coverage metric, the
definitions of barrier sweep coverage problems are given below.
\begin{definition} \rm({\bf $t-$barrier sweep coverage}\rm)
Let $\cC$ be any finite length curve on a 2D plane and $\cS=\{s_1,s_2,\cdots,s_m\}$ be a set of mobile sensors.
$\cC$ is said to be $t$-barrier sweep covered if and only if each point of $\cC$ is visited by at least one mobile sensor in
every time period $t$.
\end{definition}

\begin{definition}
\rm({\bf Barrier sweep coverage problem for single finite length curve}\rm) Let $\cC$ be a finite length curve
and $\cS=\{s_1,s_2,\cdots,s_m\}$ be a set of mobile sensors. For given $t>0$ and $v>0$, find the minimum number of
mobile sensors with uniform speed $v$ such that all points of $\cC$ are visited by all mobile sensors and $\cC$ is
$t$-barrier sweep covered.
\end{definition}

\begin{definition}
\rm({\bf Barrier sweep coverage problem for multiple finite length curves \rm(BSCMC\rm)}\rm) Let $\cX=\{\cC_1,\cC_2,\cdots,\cC_n\}$
be a set of $n$ finite length curves and $\cS=\{s_1,s_2,\cdots,s_m\}$ be a set of mobile sensors. For given  $t>0$ and $v>0$,
find the minimum number of mobile sensors with uniform speed $v$ such that each $\cC_i$ for $i=1,2,\cdots, n$ is $t$-barrier sweep covered.
\end{definition}

In general sensors are equipped with limited battery power. In order to continue sweep coverage for long time,
each mobile sensor must visits an energy source to recharge or replace its battery. Let $T$ be the maximum travel
time for a mobile sensor starting with full powered battery maintaining uniform speed $v$ till battery power goes off.
Let $e$ be an energy source on the plane and every mobile sensor can recharge or replace its battery by visiting $e$.
We define energy restricted barrier sweep coverage problem for a finite length curve $\cC$ as given below.

\begin{definition}
\rm({\bf Energy restricted barrier sweep coverage problem}\rm) Given a finite length curve $\cC$, an energy source $e$
and $v,t,T>0$, find the minimum number of mobile sensors with the uniform speed $v$ such that each point of $\cC$ can be
visited by at least one mobile sensor in every $t$ time period and each mobile sensor visits $e$ once in every $T$ time period.
\end{definition}

\section{Barrier sweep coverage for a finite length curve}\label{sec:single}
In this section, first we propose a solution for finding the optimal number of mobile sensors to sweep cover
a finite length curve. Later, we give an approximate solution for the energy restricted barrier sweep coverage
problem for a finite length curve.

\subsection{Optimal solution for a finite length curve}
The barrier sweep coverage problem for a finite length curve can be optimally solved using the following strategy.
Let $| \cC |$ be the length of the curve $\cC$. If $\cC$ is a closed curve, then partition $\cC$ into $\left\lceil \frac{| \cC | }{vt}\right\rceil$ equal parts of length $vt$ and deploy total $\left\lceil \frac{| \cC |}{vt}\right\rceil$ mobile sensors, one at each of the partitioning points. Then each mobile sensor starts moving at same time along $\cC$ in the same direction to ensure $t$-sweep coverage of $\cC$. If $\cC$ is an open curve, then join the end points of $\cC$ to make it close and apply the same strategy as above.

\subsection{Energy restricted barrier sweep coverage for a finite length curve}
In this section we propose an approximate solution for the energy restricted barrier sweep coverage problem.
The approximation factor of the proposed algorithm is $\frac{13}{3}$ though it is not known whether the problem is NP-hard or not.
Let $e$ be the energy source on the plane. To make the problem feasible, we assume that the distance of any point on $\cC$ from $e$ is less than $\frac{vT}{2}$. We define a tour named as $e$-tour which is denoted as $\{e,p,q,e\}$, that starts from $e$, visits $arc(pq)$ of $\cC$ continuously and then ends at $e$ such that total length of the tour is at most $vT$, where $p$ and $q$ are two points on $\cC$.
The objective of our technique is to find a tour through $e$ and $\cC$, which is concatenation of multiple number of $e$-tours.
Let $d(a,b)$ the Euclidean distance between two points $a$ and $b$ and $d_c(p,q)$ be the distance between two points $p$ and $q$ along $\cC$ in the clockwise direction, where $p$ and $q$ are two points on $\cC$. So, $d_c(p,q)$ is equal to the length of the $arc(pq)$.

\begin{figure}[h]
\psfrag{x}{$i_1$}
\psfrag{y}{$i_2$}
\psfrag{z}{$i_3$}
\psfrag{w}{$i_4$}
\psfrag{e}{$e$}
                \centering
                \includegraphics[width=0.4\textwidth]{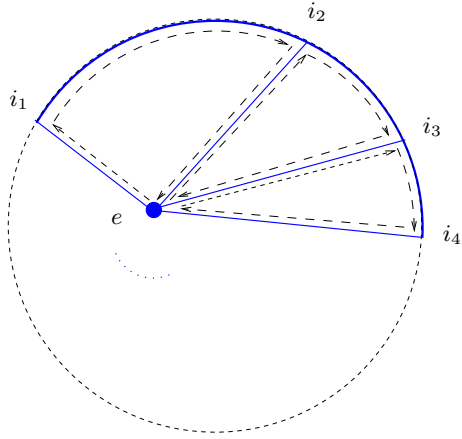}
                \caption{Showing selection of $e$-tours $\{e,i_1,i_2,e\}$, $\{e,i_2,i_3,e\}$, $\{e,i_3,i_4,e\},\cdots$}
                \label{fig:E-tour}
\end{figure}

Let us chose any point $i_1$ on a finite length `closed' curve $\cC$. Find a point $i_2$ on $\cC$ in the clockwise direction of $\cC$ (Ref. Fig. \ref{fig:E-tour}) such that $d_c(i_1,i_2)=\frac{vT}{2}-d(e,i_1)$. Here $T_1=\{e,i_1,i_2,e\}$ is a $e$-tour, since length of $T_1$ is equal to $d(e,i_1)+d_c(i_1,i_2)+d(i_2,e) \le d(e,i_1)+\frac{vT}{2}-d(e,i_1)+\frac{vT}{2}$ = $vT$, as shown in Fig. \ref{fig:E-tour}. Next $e$-tour is selected as $\{e,i_2,i_3,e\}$, where $i_3$ is a point on $\cC$ in the clockwise direction from $i_2$ such that $d_c(i_1,i_2)=\frac{vT}{2}-d(e,i_2)$.
Once the second tour is selected, we check whether the combination of the previous tour and current tour together, i.e., $\{e,i_1,i_3,e\}$ form an $e$-tour or not. If the combined tour $\{e,i_1,i_3,e\}$ is a valid $e$-tour, then the previous tour $\{e,i_1,i_2,e\}$ is updated into $\{e,i_1,i_3,e\}$ and proceed to select next $e$-tour. This updating of the previous $e$-tour with current one will continue until combined tour violates the length constraint of $e$-tour i.e., length of the combined tour is greater than $vT$. In general, after computing an $e$-tour $T_j=\{e,i_j,i_{j+1},e\}$, we select next point $i_{j+2}$ on $\cC$ such that  $d_c(i_{j+1},i_{j+2})=\frac{vT}{2}-d(e,i_{j+1})$. Then if $\{e,i_{j},i_{j+2},e\}$ this is a valid $e$-tour, we update the tour $T_j$ as $T_j=\{e,i_{j},i_{j+2},e\}$. Otherwise the tour $T_{j+1}$ is selected as $T_{j+1}=\{e,i_{j+1},i_{j+2},e\}$.
Continuing this way we decompose $\cC$ into multiple number of $e$-tours such that every point on $\cC$ is included in some $e$-tour. If $\cC$ is an `open' curve, we use the above technique to decompose $\cC$ into multiple number of $e$-tours considering one end point of $\cC$ as $i_1$ and continue till the other end point of it. Note that according to the above construction of the $e$-tour, length of the combined tour of two consecutive $e$-tours is always greater than $vT$.

Let $APPRX$ be the total tour after concatenation of the $e$-tours, $T_1,T_2,T_3,\cdots$ one after another in order of obtained by the above technique. For example $T_1=\{e,i_1,i_2,e\}$, $T_2=\{e,i_2,i_3,e\}$, $T_3=\{e,i_3,i_4,e\},\cdots$ are $e$-tours as shown in Fig. \ref{fig:E-tour}, where concatenation of those tours is $T_1 \cdot  T_2 \cdot T_3 \cdot \cdots$ =$\{e,i_1,i_2,e\} \cdot  \{e,i_2,i_3,e\} \cdot \{e,i_3,i_4,e\} \cdot \cdots$ = $\{e,i_1,i_2,e, i_2,i_3,e,i_3,i_4,e,\cdots \}$, where `$\cdot$' is denoted as the concatenation operation of the $e$-tours.
Let $ | APPRX |$ be the length of $APPRX$. Divide $APPRX$ into equal parts of length $vt$ and deploy one mobile sensor at each of the partitioning points. The mobile sensors then start moving along $APPRX$ in the same direction to ensure sweep coverage of $\cC$ for desirable long-time. The Algorithm \ref{alg:Energy}({\textsc{EnergyRestrictedBSC}}) given below for energy restricted barrier sweep coverage is applicable for closed curve. Though same Algorithm is also applicable for an open curve as explained in the previous paragraph.
\begin{algorithm}[]
\caption{\textsc{EnergyRestrictedBSC}}
\begin{algorithmic}[1]
\STATE{Choose any point $i_1$ on $\cC$.}
\STATE{$\cC'=\cC$, $n'=1$.}
\WHILE{$\cC' \ne \phi$}
\IF{$d(e,i_{n'})+ | \cC'|  \le \frac{vT}{2}$}
\STATE{$h=i_1$}
\ELSE
\STATE{Select a point $h$ on $\cC$ in a clockwise direction from $i_{n'}$ such that $d(i_{n'},h)= \frac{vT}{2}-d(e,i_{n'})$}
\ENDIF
\IF{$n' \ne 1$ and $d(e,i_{{n'}-1})+d_c(i_{{n'}-1},h)+d(e,h) \le vT$}
\STATE{$\cC'=\cC'\setminus arc(i_{n'}h)$}
\STATE{$i_{n'}=h$, $T_{{n'}-1}=\{e,i_{{n'}-1},i_{n'},e\}$}
\ELSE \label{step:violation}
\STATE{ $i_{{n'}+1}=h$, $T_{n'}$ = $\{e,i_{n'},i_{{n'}+1},e\}$}
\STATE{$\cC'=\cC'\setminus arc(i_{n'}i_{{n'}+1})$}
\STATE{$n'=n'+1$}
 \ENDIF
\STATE{$\cC'=\cC'\setminus arc(i_{n'}i_{{n'}+1})$}
\ENDWHILE
\STATE{$APPRX=T_1\cdot T_2 \cdot T_2 \cdot \cdots \cdot T_{n'}$, \\ where `$\cdot$' is denoted as concatenation operation.}
\STATE{Divide $APPRX$ into equal parts of length $vt$ and deploy one mobile sensor at each of the partitioning points.}
\STATE{All mobile sensor start moving along $APPRX$ at same time in same direction.}
\end{algorithmic}\label{alg:Energy}
\end{algorithm}
\begin{theorem}
According to the Algorithm \ref{alg:Energy}, each mobile sensor visits $e$ in every $T$ time period and each point
on $\cC$ is visited by at least one mobile sensor in every $t$ time period.
\end{theorem}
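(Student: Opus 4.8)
The plan is to record two structural facts about the tour $APPRX$ output by Algorithm~\ref{alg:Energy}, and then deduce the two required timing properties from them. Fact~(i): every point of $\cC$ lies on $APPRX$. This is immediate from the loop, which repeatedly deletes arcs $arc(i_{n'}h)$ (or $arc(i_{n'}i_{n'+1})$) from $\cC'$ until $\cC'=\phi$ and never deletes an arc that is not placed in some $e$-tour; hence the union of the selected $e$-tours $T_1,\dots,T_{n'}$ contains $\cC$, and so does their concatenation $APPRX=T_1\cdot T_2\cdots T_{n'}$. Fact~(ii): along $APPRX$ the energy source $e$ recurs, and the sub-walk of $APPRX$ strictly between two consecutive occurrences of $e$ is exactly one $T_j$, whose length is at most $vT$. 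The recurrence is clear because each $T_j=\{e,\cdot,\cdot,e\}$ begins and ends at $e$, so the junctions of the concatenation are all at $e$. For the length bound: a freshly created tour $\{e,i_{n'},i_{n'+1},e\}$ with $d_c(i_{n'},i_{n'+1})=\frac{vT}{2}-d(e,i_{n'})$ has length $d(e,i_{n'})+d_c(i_{n'},i_{n'+1})+d(i_{n'+1},e)=\frac{vT}{2}+d(i_{n'+1},e)\le vT$, using the feasibility assumption $d(p,e)<\frac{vT}{2}$ for all $p\in\cC$; and a tour is overwritten by a merged tour $\{e,i_{n'-1},i_{n'},e\}$ only after the explicit test $d(e,i_{n'-1})+d_c(i_{n'-1},h)+d(e,h)\le vT$ in line~9 succeeds.

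For the energy constraint, fix any one sensor and follow it along $APPRX$ (which is a closed walk, since it starts and ends at $e$, so the sensor cycles around it forever). Its initial partition point lies inside some $e$-tour $T_j$; it reaches the terminal $e$ of $T_j$ after travelling a distance at most $|T_j|\le vT$, i.e. within time at most $T$. Thereafter the distances between successive visits to $e$ are exactly $|T_{j+1}|,|T_{j+2}|,\dots$ (indices cyclically), each at most $vT$ by Fact~(ii), so each such gap takes time at most $T$ at speed $v$. Hence the sensor visits $e$ at least once in every time window of length $T$.

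For the sweep constraint, argue exactly as in the optimal single-curve algorithm. The closed walk $APPRX$ of length $|APPRX|$ is cut into $k=\left\lceil |APPRX|/(vt)\right\rceil$ equal arcs, each of length $|APPRX|/k\le vt$, with one sensor at each cut point, all moving in the same direction at speed $v$. Fix a point $x\in APPRX$. At any time the sensor occupying the arc that ends at $x$ reaches $x$ within time $(|APPRX|/k)/v\le t$, and each subsequent sensor reaches $x$ exactly $(|APPRX|/k)/v\le t$ after the previous one, since consecutive sensors stay a fixed arc-distance $|APPRX|/k$ apart. Thus $x$ is visited at least once in every window of length $t$; by Fact~(i) this holds for every point of $\cC$, which is what we need.

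The step I expect to require the most care is Fact~(ii) together with the bookkeeping behind Fact~(i): one must verify that the iterative \emph{merging} of a previously created $e$-tour with the current one (lines~9--11) never yields a walk longer than $vT$ and never discards an arc of $\cC$, so that after all merges the decomposition of $\cC$ into $e$-tours is still complete and every piece still obeys the length bound. One also has to state precisely that ``divide $APPRX$ into equal parts of length $vt$'' means taking $\left\lceil |APPRX|/(vt)\right\rceil$ parts (so the common arc length is $\le vt$), exactly as in the single-curve construction. Once this is pinned down, the two timing arguments above are routine.
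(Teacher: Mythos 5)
Your proposal is correct and follows essentially the same argument as the paper: each $e$-tour (the sub-walk of $APPRX$ between consecutive occurrences of $e$) has length at most $vT$, giving the recharge bound, and the sensors deployed at the partition points stay at most $vt$ apart along $APPRX$ at all times, giving the $t$-sweep bound. You simply spell out the bookkeeping (that $APPRX$ covers all of $\cC$ and that merges in the algorithm preserve the $vT$ bound) that the paper's shorter proof leaves implicit.
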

\begin{proof}
According to our proposed Algorithm \ref{alg:Energy}, the mobile sensors move along the tour $APPRX$. As the length of each $e$-tour is less than or equals to $vT$, the mobile sensors will visit $e$ after traveling at most $vT$ distance since its last visit of $e$. Therefore, each mobile sensor visits $e$ once in every $T$ time period. Again, the mobile sensors are deployed at every partitioning points of  $APPRX$ and two consecutive partitioning points are $vt$ distance apart. The relative distance between any two consecutive mobile sensors is $vt$ at any time as they are moving in same speed $v$ along the same direction. Therefore, any point on $\cC$, which is visited at time $t_0$ (say) by a mobile sensor visits again within time $t+t_0$ by the next mobile sensor following it.
\qed\end{proof}

To analyze the approximation factor with respect to the optimal solution, we consider some special points
on $\cC$ as follows. Let $i_p^1$, $i_p^2$ be two special points on the $arc(i_pi_{p+1})$ of $\cC$ such that
the $arc(i_pi_{p+1})$ is partitioned into three equal parts, i.e., the length of $arc(i_pi_p^1)$ equal to the
length of $arc(i_p^1i_p^2)$ equal to the length of $arc(i_p^2i_{p+1})$. We define a set of points, $\cI=\{i_j, i_j^1, i_j^2 | j=1~\text{to}~n' \}$.
Following two Lemmas give an upper bound of the length of the tour $APPRX$ and a lower bound of the length of the optimal tour respectively.

\begin{lemma}\label{lem:upperb}
$$|APPRX| \le \frac{1}{3} \left(2\sum_j\left(d(e,i_j)+d(e,i_j^1) \\+d(e,i_j^2)\right)+5 | \cC | \right)$$.
\end{lemma}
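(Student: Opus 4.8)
The plan is to bound $|APPRX|$ by summing the lengths of all the $e$-tours $T_1, T_2, \ldots, T_{n'}$ that are concatenated to form it. Since $APPRX = T_1 \cdot T_2 \cdots T_{n'}$, we have exactly $|APPRX| = \sum_{j=1}^{n'} |T_j|$, where each $|T_j| = d(e,i_j) + d_c(i_j, i_{j+1}) + d(e, i_{j+1})$. Summing this directly gives $\sum_j d_c(i_j, i_{j+1}) = |\cC|$ (the arcs tile the curve) plus the "spoke" terms $\sum_j \big(d(e,i_j) + d(e,i_{j+1})\big)$, which is essentially $2\sum_j d(e,i_j)$ up to the endpoints. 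This is not yet the claimed bound — the claimed bound involves the midpoints $i_j^1, i_j^2$ and has the specific coefficients $\tfrac{2}{3}$ on the spoke terms and $\tfrac{5}{3}$ on $|\cC|$ — so the real work is to replace the raw spoke sum $2\sum_j d(e,i_j)$ by something expressed through $\cI$.

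**The key trick.**

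The hard part, and the step I expect to be the main obstacle, is showing that each spoke length $d(e, i_{j+1})$ can be charged against a combination of $d(e, i_j), d(e,i_j^1), d(e,i_j^2)$ and the arc length $d_c(i_j, i_{j+1})$, using the triangle inequality along the subdivision points $i_j, i_j^1, i_j^2, i_{j+1}$ of $arc(i_j i_{j+1})$. Concretely, by the triangle inequality $d(e, i_{j+1}) \le d(e, i_j^2) + d(i_j^2, i_{j+1}) \le d(e,i_j^2) + d_c(i_j^2, i_{j+1}) = d(e,i_j^2) + \tfrac{1}{3} d_c(i_j,i_{j+1})$, and similarly one can bound $d(e,i_j)$-type terms. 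The point of introducing the third-points is that we get a symmetric set of inequalities: $d(e,i_j) \le d(e,i_j^1) + \tfrac13 d_c(i_j,i_{j+1})$, $d(e,i_j^1) \le d(e,i_j^2)+\tfrac13 d_c(\cdot)$, etc. Averaging the available inequalities for each spoke (there are essentially three ways to reach a spoke endpoint from a nearby $\cI$-point, each costing at most $\tfrac13|\cC$-arc$|$), one gets each spoke bounded by the average of the three $d(e,\cdot)$ values over that arc plus $\tfrac13$ of the arc length; summing over $j$ then produces the factor $\tfrac23$ on $\sum_j(d(e,i_j)+d(e,i_j^1)+d(e,i_j^2))$ and, combining the $\tfrac13|\cC|$ slack accumulated for the two spokes per tour with the one clean $|\cC|$ from the arcs, the factor $\tfrac53$ on $|\cC|$.

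**Carrying it out.**

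First I would write $|APPRX| = \sum_j |T_j| = |\cC| + \sum_j\big(d(e,i_j) + d(e,i_{j+1})\big)$, carefully handling the closed-curve wraparound so that each $i_j$ appears with total weight $2$ (it is the right endpoint of $T_{j-1}$ and the left endpoint of $T_j$); so $\sum_j(d(e,i_j)+d(e,i_{j+1})) = 2\sum_j d(e,i_j)$. Next, for each $j$ I would bound $2\,d(e,i_j)$ using the triangle inequality through the $\cI$-points on the two arcs incident to $i_j$: one copy of $d(e,i_j)$ routed "backward" through $i_{j-1}^2$ (cost $\le d(e,i_{j-1}^2) + \tfrac13 d_c(i_{j-1},i_j)$) and one copy routed "forward" through $i_j^1$ (cost $\le d(e,i_j^1) + \tfrac13 d_c(i_j,i_{j+1})$), and then further relate $d(e,i_j^2), d(e,i_j^1)$ to each other the same way to symmetrize the coefficients. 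Collecting terms, every point of $\cI$ receives coefficient $\tfrac23$ and the total arc-slack is $\tfrac23|\cC|$, which added to the clean $|\cC|$ gives $\tfrac53|\cC|$; factoring out $\tfrac13$ yields exactly $|APPRX| \le \tfrac13\big(2\sum_j(d(e,i_j)+d(e,i_j^1)+d(e,i_j^2)) + 5|\cC|\big)$. The bookkeeping of which inequality is charged to which spoke so that every $\cI$-point ends up with coefficient exactly $2$ (before the global $\tfrac13$) is the delicate part; the geometric inputs are all just the triangle inequality and the fact that a straight-line distance is at most the corresponding arc length.
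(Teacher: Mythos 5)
Your proposal is correct and takes essentially the same route as the paper: write $|APPRX| = |\cC| + 2\sum_j d(e,i_j)$, then combine this identity with the two triangle-inequality bounds $d(e,i_j)\le d(e,i_j^1)+d_c(i_j,i_j^1)$ and $d(e,i_j)\le d(e,i_j^2)+d_c(i_j,i_j^2)$, using $\sum_j d_c(i_j,i_j^1)=\tfrac{|\cC|}{3}$ and $\sum_j d_c(i_j,i_j^2)=\tfrac{2|\cC|}{3}$, so that averaging gives coefficient $\tfrac23$ on the $\cI$-sum and $\tfrac53$ on $|\cC|$. The forward/backward re-routing through adjacent arcs sketched in your final paragraph is unnecessary delicacy --- the simple per-$j$ averaging you already state in the key-trick paragraph (slacks $0$, $\tfrac13$, $\tfrac23$ of the arc, averaging to $\tfrac13$) is precisely the paper's computation.
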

\begin{proof}
According to the Algorithm \ref{alg:Energy}, total length of the tour $APPRX$ is
\begin{eqnarray}\label{Eq:eq1}
|APPRX| & = & d(e,i_1)+d_c(i_1,i_2)+d(i_2,e)+d(e,i_2) \nonumber  \\ & & +d_c(i_2,i_3)  +d(i_3,e)+ \cdots + d(e,i_1)\nonumber \\
& =& | \cC | +2\sum_j d(e,i_j).
\end{eqnarray}
Now by triangle inequality,
\begin{eqnarray}
d(e,i_j) & \le & d(e,i_j^1)+d_c(i_j,i_j^1)~ {\rm and} \nonumber \\
d(e,i_j) & \le & d(e,i_j^2)+d_c(i_j,i_j^2) \nonumber
\end{eqnarray}
Therefore,
\begin{eqnarray}\label{Eq:neq1}
  | APPRX |& \le &  | \cC | + 2\sum_j\left(d(e,i_j^1)+d_c(i_j,i_j^1)\right)
\end{eqnarray}
\begin{eqnarray}\label{Eq:neq2}
  | APPRX |& \le & | \cC | + 2\sum_j\left(d(e,i_j^2)+d_c(i_j,i_j^2)\right)
\end{eqnarray}

From above the equation \ref{Eq:eq1}, \ref{Eq:neq1} and \ref{Eq:neq2}, we can write,
\begin{eqnarray}\label{Eq:neq3}
3 | APPRX | & \le & 3 | \cC |+ 2\sum_j d(e,i_j)  +  2\sum_j\left(d(e,i_j^1)+d_c(i_j,i_j^1)\right)  \nonumber  \\ & & + 2\sum_j\left(d(e,i_j^2)+d_c(i_j,i_j^2)\right)
\end{eqnarray}
As the points $i_j^1$ and $i_j^2$ divide length of the $arc(i_ji_{j+1})$ into three equal parts, therefore, \\$ \displaystyle \sum_j d_c(i_j,i_j^1)=\frac{| \cC |}{3}$ and $ \displaystyle \sum_j d_c(i_j,i_j^2)=\frac{2 | \cC |}{3}$.
Using these two results, equation \ref{Eq:neq3} can be written as:
\begin{eqnarray}
 3 | APPRX | & \le & 3 | \cC |+ 2\sum_j\left(d(e,i_j)+d(e,i_j^1) + d(e,i_j^2)\right) \nonumber  + 2\frac{ | \cC |}{3}+2\frac{2 | \cC |}{3} \nonumber \\
     | APPRX | & \le & \frac{1}{3} \left(2\sum_j\left(d(e,i_j)+d(e,i_j^1)+d(e,i_j^2)\right)\right) \nonumber  + \frac{5}{3} | \cC | . \nonumber
\end{eqnarray}
\qed\end{proof}

\begin{figure}[h]
\psfrag{x}{\hspace{-.2cm}$p$}
\psfrag{y}{$i_j$}
\psfrag{z}{$i_{j+1}$}
\psfrag{w}{\hspace{-.5cm}$i_{j-1}$}
\psfrag{p}{$q$}
\psfrag{q}{$i_{j+2}$}
\psfrag{e}{\vspace{1cm}$e$}
                \centering
                \includegraphics[width=0.4\textwidth]{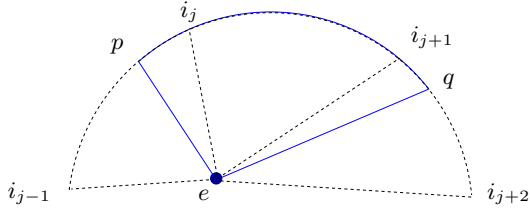}
                \caption{Showing one $e$-tour $OPT_l=\{e,p,q,e\}$ of the optimal tour $OPT$}
                \label{fig:opt-tour}
\end{figure}

\begin{lemma} \label{lem:lowerb}
Let $OPT$ be the optimal tour of the mobile sensors and $ | OPT |$ be length of $OPT$, then
$$\displaystyle | OPT | \ge  \max\left\{\frac{1}{4}\sum_j\left(d(e,i_j)+d(e,i_j^1)+d(e,i_j^2)\right), | \cC | \right\}$$.
\end{lemma}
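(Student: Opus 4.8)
The plan is to prove the two lower bounds separately and take their maximum. The bound $|OPT|\ge|\cC|$ is immediate: in any feasible solution the mobile sensors together visit every point of $\cC$, so $OPT$ traces a closed route containing $\cC$ and is therefore at least as long as $\cC$. For the other bound I would use that a feasible $OPT$ decomposes into $e$-tours $OPT_1,\dots,OPT_k$, each of length at most $vT$ and each covering a contiguous arc of $\cC$, with the union of these arcs equal to $\cC$; orient each $OPT_l=\{e,p_l,q_l,e\}$ so that it covers $arc(p_lq_l)$ in the clockwise sense.

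The crucial step is to bound how many points of $\cI$ a single $OPT_l=\{e,p,q,e\}$ can cover. I claim no $e$-tour of $OPT$ can cover a complete arc of the form $arc(i_ji_{j+2})$. Suppose one did; then along $\cC$ the order is $p,\dots,i_j,\dots,i_{j+1},\dots,i_{j+2},\dots,q$, and the triangle inequality gives $d(e,i_j)\le d(e,p)+d_c(p,i_j)$ and $d(e,i_{j+2})\le d(e,q)+d_c(i_{j+2},q)$. Adding $d_c(i_j,i_{j+2})$ to both sides and using $d_c(p,i_j)+d_c(i_j,i_{j+2})+d_c(i_{j+2},q)=d_c(p,q)$,
\[
d(e,i_j)+d_c(i_j,i_{j+2})+d(e,i_{j+2})\ \le\ d(e,p)+d_c(p,q)+d(e,q)\ =\ |OPT_l|\ \le\ vT .
\]
But by the construction in Algorithm \ref{alg:Energy} the combined tour $\{e,i_j,i_{j+2},e\}$ of the two consecutive $e$-tours $T_j$ and $T_{j+1}$ has length strictly greater than $vT$, i.e.\ $d(e,i_j)+d_c(i_j,i_{j+2})+d(e,i_{j+2})>vT$, a contradiction. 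Hence $OPT_l$ contains at most two ``anchor'' points among $i_1,\dots,i_{n'}$, and these are consecutive; consequently the points of $\cI$ lying on $arc(p_lq_l)$ form a contiguous block of at most $8$ of them (the eight points $i_{a-1}^1,i_{a-1}^2,i_a,i_a^1,i_a^2,i_{a+1},i_{a+1}^1,i_{a+1}^2$ lying strictly between the anchors $i_{a-1}$ and $i_{a+2}$).

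It remains to charge $\sum_j\bigl(d(e,i_j)+d(e,i_j^1)+d(e,i_j^2)\bigr)=\sum_{x\in\cI}d(e,x)$ to $|OPT|$. For a fixed $OPT_l=\{e,p,q,e\}$ and any point $x$ on $arc(pq)$, writing $\alpha=d_c(p,x)$ and $\beta=d_c(x,q)$, the two routes $e\to p\to x$ and $e\to q\to x$ have lengths $d(e,p)+\alpha$ and $d(e,q)+\beta$, each at least $d(e,x)$; adding them and using $\alpha+\beta=d_c(p,q)$ yields $d(e,x)\le\tfrac12|OPT_l|$. Now assign to each $x\in\cI$ one $e$-tour $OPT_{l(x)}$ whose arc contains $x$, and let $m_l$ be the number of points of $\cI$ assigned to $OPT_l$, so that $\sum_l m_l=3n'$ and $m_l\le 8$ for every $l$. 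Then
\[
\sum_{x\in\cI}d(e,x)\ \le\ \tfrac12\sum_{l=1}^{k}m_l\,|OPT_l|\ \le\ \tfrac12\cdot 8\sum_{l=1}^{k}|OPT_l|\ =\ 4\,|OPT| ,
\]
so $|OPT|\ge\tfrac14\sum_j\bigl(d(e,i_j)+d(e,i_j^1)+d(e,i_j^2)\bigr)$, which together with $|OPT|\ge|\cC|$ proves the lemma.

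The main obstacle is the geometric claim of the second paragraph: this is exactly the point where the invariant of the greedy construction --- that merging two consecutive $e$-tours overshoots $vT$ --- is turned into a lower bound on the optimum. Some care is needed to ensure this invariant is available for every consecutive pair of $e$-tours (including the wrap-around pair when $\cC$ is closed), that the $e$-tours of $OPT$ can be consistently oriented so that ``$OPT_l$ covers $arc(i_ji_{j+2})$'' is meaningful, and that $OPT$ really does admit a decomposition into $e$-tours whose arcs cover $\cC$; the block-size count ($\le 8$) and the per-point inequality $d(e,x)\le\tfrac12|OPT_l|$ are then routine bookkeeping.
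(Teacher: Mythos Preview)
Your proof is correct and follows essentially the same route as the paper's: both argue, via the invariant that two consecutive algorithm $e$-tours cannot be merged into one of length $\le vT$, that no optimal $e$-tour $OPT_l$ can fully contain an arc $arc(i_ji_{j+2})$, hence at most $8$ points of $\cI$ lie on its arc, and then combine this with $d(e,x)\le\tfrac12|OPT_l|$ and sum. Your write-up is in fact a little more careful than the paper's --- you make the triangle-inequality step and the charging explicit, and you flag the wrap-around and orientation subtleties that the paper leaves implicit.
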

\begin{proof}
Let $OPT_l=\{e,p,q,e\}$ be an $e$-tour of the optimal tour $OPT$ (Ref. Fig. \ref{fig:opt-tour}). We claim that at most one $arc(i_ji_{j+1})$, part of an $e$-tour computed using the Algorithm \ref{alg:Energy} for some $j$, can be completely contained in $arc(pq)$, where $arc(pq)$ is a part of $OPT_l$.
To prove this, let us assume that there are two such arcs $arc(i_{j-1}i_j)$ and $arc(i_ji_{j+1})$ are completely contained in $arc(p,q)$.
As length of combined tour of  two consecutive $e$-tours is always greater than $vT$ according to step \ref{step:violation} of the Algorithm \ref{alg:Energy}, therefore,
\begin{equation}\label{eq:vt}
\left( d(e,i_{j-1})+d_c(i_{j-1},i_{j+1})+d(e,i_{j+1})\right) > vT
\end{equation}
Now,
\begin{eqnarray}
| OPT_l | &=& d(e,p)+d_c(p,i_{j-1})+d_c(i_{j-1},i_j)\nonumber \\ & &  +d_c(i_j,i_{j+1})+d_c(i_{j+1},q)+d(e,q)\nonumber \\
& \ge& d(e,i_{j-1})+d_c(i_{j-1},i_{j+1})+d(e,i_{j+1}) \nonumber\\
& > & vT ~ \rm{(from~ equation~ \ref{eq:vt})} \nonumber
\end{eqnarray}
This contradicts the fact that $OPT_l$ is an $e$-tour. Therefore, at most one $arc(i_ji_{j+1})$ for some $j$, can be completely contained in $arc(pq)$. Hence, at most one complete $arc(i_ji_{j+1})$ and two $arc(i_{j-1}i_{j})$ and $arc(i_{j+1}i_{j+2})$ partially contained in $arc(pq)$ as shown in Fig. \ref{fig:opt-tour}.
Therefore, maximum eight points from the set $\cI$ may belong to the $arc(pq)$, since there are four points $i_j$, $i_j^1$, $i_{j}^2$, $i_{j+1}$ for the $arc(i_ji_{j+1})$ and at most four special points, $i_{j-1}^1$, $i_{j-1}^2$ and $i_{j+1}^1$, $i_{j+1}^2$ for the $arc(i_{j-1}i_{j})$ and $arc(i_{j+1}i_{j+2})$ respectively.

Now, for any point $x$ on $arc(pq)$ of $OPT_l$ implies  $ | OPT_l |  \ge 2d(e,x)$. As there are at most eight points of $\cI$ in $arc(pq)$, which implies
\begin{eqnarray}
\displaystyle  | OPT_l |  & \ge & \frac{2\sum_{x ~\in~ \cI \cap arc(pq)} d(e,x)}{8}
\end{eqnarray}
Since all the points in $\cI$ are on $arc(pq)$ for some $OPT_l$, where $OPT_l$ is a part of $OPT$, therefore,\\
$| OPT |  \ge \sum _l | OPT_l | \ge \frac{2\sum_j\left(d(e,i_j)+d(e,i_j^1)+d(e,i_j^2)\right)}{8}$.
Also, $ | OPT |  \ge  | \cC | $.
Therefore, \\$ | OPT | \ge \max\left\{\frac{1}{4}\sum_j\left(d(e,i_j)+d(e,i_j^1) \\ +d(e,i_j^2)\right), | \cC | \right\}$.
\qed\end{proof}

\begin{theorem}
The approximation factor of the Algorithm \ref{alg:Energy} for energy restricted barrier sweep coverage problem is $\displaystyle \frac{13}{3}$.
\end{theorem}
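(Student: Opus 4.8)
The plan is to combine the upper bound on $|APPRX|$ from Lemma~\ref{lem:upperb} with the lower bound on $|OPT|$ from Lemma~\ref{lem:lowerb}, and then relate the number of mobile sensors used by the Algorithm to the number used by an optimal solution. First I would recall that the Algorithm partitions $APPRX$ into arcs of length $vt$ and deploys one sensor per partitioning point, so the number of sensors deployed is $\left\lceil \frac{|APPRX|}{vt} \right\rceil$; likewise, any feasible solution — in particular the optimal one — patrols a closed tour through $e$ and $\cC$ of length $|OPT|$, and by the same partitioning argument (which is exactly the optimal strategy from the single-curve subsection applied to that tour) requires at least $\left\lceil \frac{|OPT|}{vt} \right\rceil$ sensors. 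So it suffices to bound $|APPRX|$ in terms of $|OPT|$.

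Next I would feed the two bounds together. Write $S = \sum_j\left(d(e,i_j)+d(e,i_j^1)+d(e,i_j^2)\right)$. Lemma~\ref{lem:upperb} gives $|APPRX| \le \frac{1}{3}\left(2S + 5|\cC|\right)$, and Lemma~\ref{lem:lowerb} gives both $S \le 4|OPT|$ and $|\cC| \le |OPT|$. Substituting,
\begin{equation}
|APPRX| \;\le\; \frac{1}{3}\left(2\cdot 4|OPT| + 5|OPT|\right) \;=\; \frac{13}{3}|OPT|. \nonumber
\end{equation}
Hence the number of sensors deployed by the Algorithm is at most $\left\lceil \frac{|APPRX|}{vt}\right\rceil \le \left\lceil \frac{13}{3}\cdot\frac{|OPT|}{vt}\right\rceil$, which is within a factor $\frac{13}{3}$ of the optimum $\left\lceil \frac{|OPT|}{vt}\right\rceil$, giving the claimed approximation factor.

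The main obstacle — and the step that needs to be argued carefully rather than quoted — is the claim that the optimal solution's total patrol length is itself at least $|OPT|$ in the sense used by Lemma~\ref{lem:lowerb}; that is, one must justify that an optimal energy-restricted solution can be taken to consist of a single closed tour that decomposes into $e$-tours, so that Lemma~\ref{lem:lowerb}'s lower bound genuinely applies to it. This requires observing that each mobile sensor in any feasible solution traces a closed route that returns to $e$ within every window of length $T$ (hence its route is a concatenation of $e$-tours), and that the union of all sensors' routes must cover $\cC$; taking $OPT$ to be this union (or the single-sensor route in the relevant comparison) makes the bound $|OPT|\ge\max\{S/4,\,|\cC|\}$ legitimate. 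Once that structural point is in place, the arithmetic above is routine. I would also remark on the ceiling functions: since $\lceil \alpha x\rceil \le \lceil \alpha \lceil x\rceil\rceil$ fails in general, one should state the guarantee as "the number of sensors used is at most $\frac{13}{3}$ times the optimal number" up to the standard rounding, which is the conventional reading of the approximation factor here.
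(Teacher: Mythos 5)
Your proposal is correct and follows essentially the same route as the paper: combine Lemma~\ref{lem:upperb} with the two bounds $S \le 4|OPT|$ and $|\cC| \le |OPT|$ from Lemma~\ref{lem:lowerb} to get $|APPRX| \le \frac{13}{3}|OPT|$, then compare $\left\lceil |APPRX|/(vt)\right\rceil$ with $N_{opt} \ge |OPT|/(vt)$. Your added remarks on why Lemma~\ref{lem:lowerb} applies to the optimal routes and on the rounding convention are points the paper leaves implicit, but they do not change the argument.
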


\begin{proof}
Let $N$ be the number of mobile sensors needed in our solution and $N_{opt}$ be the number of mobile sensors in the optimal solution.\\ Then $\displaystyle N =\left\lceil\frac{ | APPRX | }{vt}\right\rceil$ and $\displaystyle N_{opt}\ge \frac{ | OPT | }{vt}$.\\
From Lemma \ref{lem:upperb} and Lemma \ref{lem:lowerb}, we have $\displaystyle\frac{N}{N_{opt}} \le \frac{ | APPRX | }{ |  OPT |  } \le \frac{8}{3}+\frac{5}{3} =\frac{13}{3}$.\\ Hence the approximation factor of our proposed Algorithm  \ref{alg:Energy} is $\displaystyle \frac{13}{3}$.
\qed\end{proof}

\section{Barrier sweep coverage problem for multiple finite length curves}\label{sec:multiple}

Finding minimum number of mobile sensors with uniform velocity to guarantee sweep coverage for a set of points
in 2D plane is NP-hard and it cannot be approximated within a factor of 2 unless P=NP, as proved
in the paper \cite{Li11} by Li et al. The point sweep coverage problem as proposed in \cite{Li11} is a special
case of BSCMC when all curves are points. Therefore, BSCMC is NP-hard and cannot be approximated within a factor of 2 unless P=NP.

\subsection{2-approximation Solution for a Special Case}\label{sec:FLSC}
In this section we propose a solution for BSCMC for a special case where each mobile sensor must visits all the points of each curve. We propose an algorithm where each curve is a line segment. The same idea works for finite length curves explained in the last paragraph of section \ref{sec:multiple}.
\begin{figure}[h]
\psfrag{a}{$a$}
\psfrag{b}{$b$}
\psfrag{l}{$l$}
                \centering
                \includegraphics[width=0.4\textwidth]{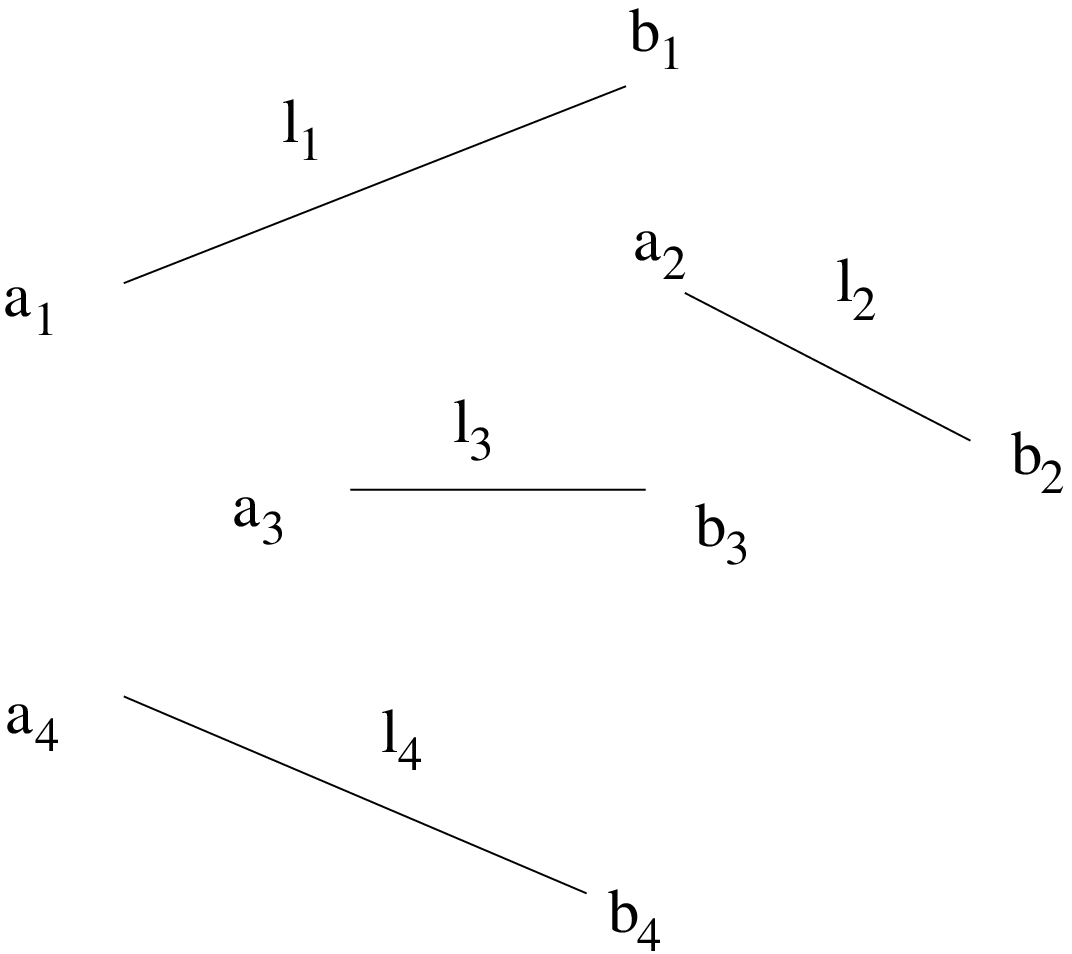}
                \caption{Set of line segments $\cL$}
                \label{fig:setOfLine}
\end{figure}
Let $\cL=\{l_1,l_2,\cdots,l_n\}$ be a set of line segments on a 2D plane. Let $S$ be the set of shortest distance line $s_{ij}$ between every pair of line segments ($l_i,l_j$) for $i\ne j$. We define a complete weighted graph $G=(V,E)$, where $V=\{v_1,v_2,\cdots,v_n\}$ is the set of vertices. The vertex $v_i$ represents line segment $l_i$ for $i= 1$ to $n$. $E=V\times V$ is the set of edges, where the edge $(v_i,v_j)$ represents $s_{ij} \in S$ and edge weight $w(v_i,v_j)=$ length of $s_{ij}$. Let $T$ be a minimum spanning tree (MST) of $G$.  $T$ can be represented as $T_\cL$, where $T_\cL=\cL\cup\{s_{ij} : (v_i,v_j) \in T \}$. An illustration is shown from Fig. \ref{fig:setOfLine} to Fig. \ref{fig:TL}, where a set of line segments is shown in Fig. \ref{fig:setOfLine}, corresponding complete graph $G$ is shown in Fig. \ref{fig:CompleteG}, an MST $T$ of $G$ is shown in Fig. \ref{fig:MST} and the representation $T_\cL$ of $T$ is shown in Fig. \ref{fig:TL}.

We construct a graph $G_\cL$ from $T_\cL$ by introducing vertices at the end points of each line segment in $T_\cL$, which may split $l_i$'s into several smaller line segments. According to Fig. \ref{fig:GL}, vertices of $G_\cL$ are \{$a_1, p, b_1$, $a_2,q,b_2$, $a_3,b_3$, $a_4,r,b_4$\}. The vertex $p$ splits line segment ($a_1,b_1$) into two smaller line segments ($a_1,p$) and ($p,b_1$). Similarly, vertices $q$ and $r$ split ($a_2,b_2$) and ($a_4,b_4$) into ($a_2,q$), ($q,b_2$) and ($a_4,r$), ($r,b_4$) respectively, whereas the line segment ($a_3,b_3$) remains same.
Each of these line segments and the lines corresponding to the edges of $T$ together are the edges of the graph $G_\cL$. According to Fig. \ref{fig:GL}, edges of $G_\cL$ are \{($a_1,p$), ($p,b_1$), ($a_2,q$), ($q,b_2$), ($a_3,b_3$), ($a_4,r$), ($r,b_4$), ($p,a_3$), ($a_3, r$), ($b_3,q$)\}.

\begin{minipage}{0.45\linewidth}
          \begin{figure}[H]
\psfrag{v}{$v$}
\psfrag{1}{$_1$}
\psfrag{2}{$_2$}
\psfrag{3}{$_3$}
\psfrag{4}{$_4$}
                \centering
              \includegraphics[width=.73\linewidth]{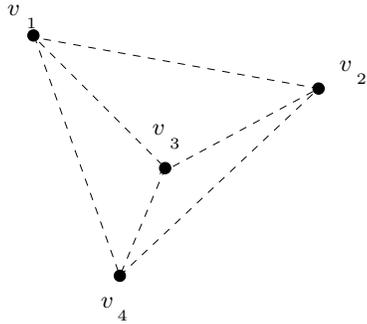}
              \caption{Complete graph $G$}\label{fig:CompleteG}
          \end{figure}
\end{minipage}
\hspace*{.85cm}
\begin{minipage}{0.45\linewidth}
          \begin{figure}[H]
\psfrag{v}{$v$}
\psfrag{1}{$_1$}
\psfrag{2}{$_2$}
\psfrag{3}{$_3$}
\psfrag{4}{$_4$}
                \centering
              \includegraphics[width=.7\linewidth]{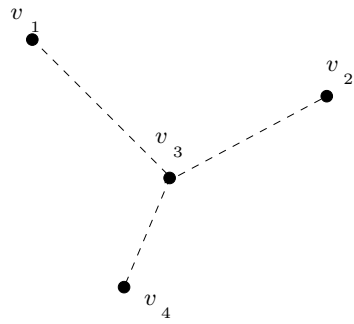}
              \caption{MST $T$ of $G$}\label{fig:MST}
          \end{figure}
\end{minipage}

\begin{minipage}{0.45\linewidth}
          \begin{figure}[H]
\psfrag{a}{$a$}
\psfrag{b}{$b$}
\psfrag{p}{$p$}
\psfrag{q}{$q$}
\psfrag{r}{$r$}
                \centering
              \includegraphics[width=\linewidth]{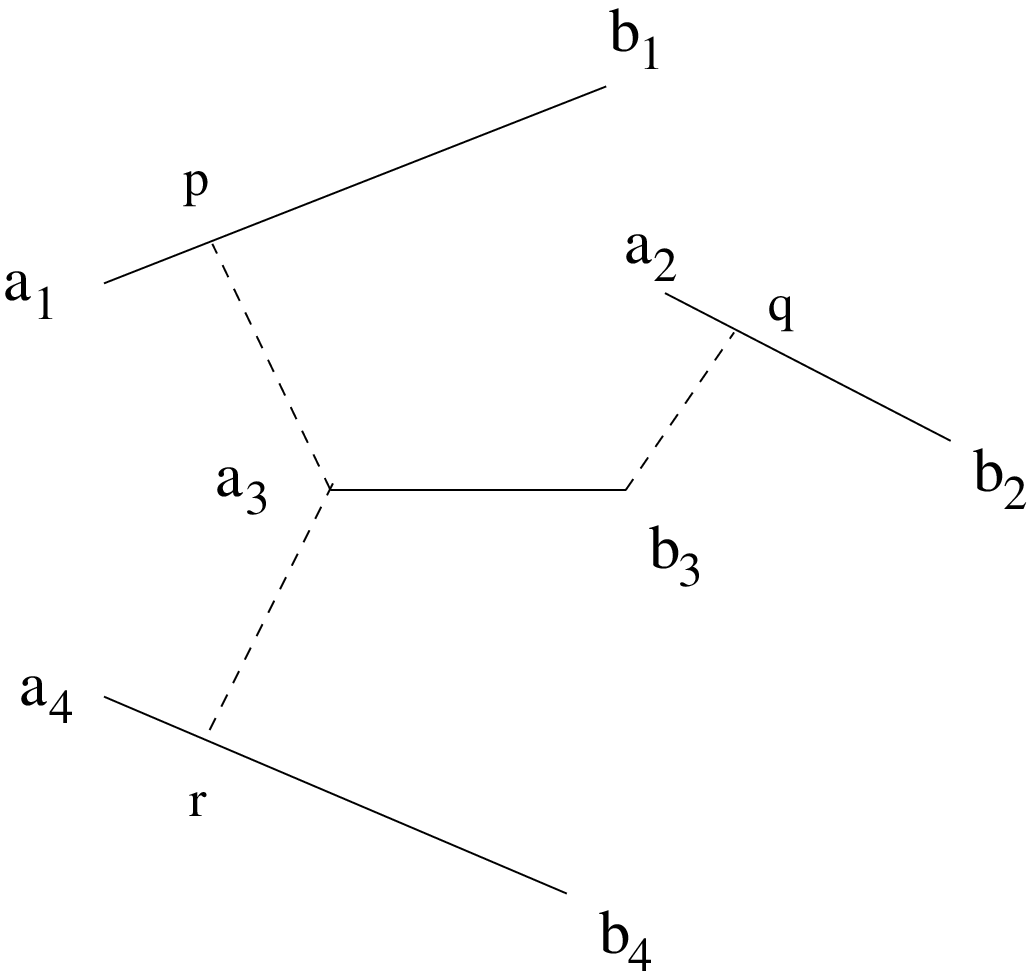}
              \caption{$T_{\cL}$}\label{fig:TL}
          \end{figure}
\end{minipage}
\hspace*{.85cm}
\begin{minipage}{0.45\linewidth}
          \begin{figure}[H]
          \psfrag{a}{$a$}
\psfrag{b}{$b$}
\psfrag{p}{$p$}
\psfrag{q}{$q$}
\psfrag{r}{$r$}
                \centering
              \includegraphics[width=\linewidth]{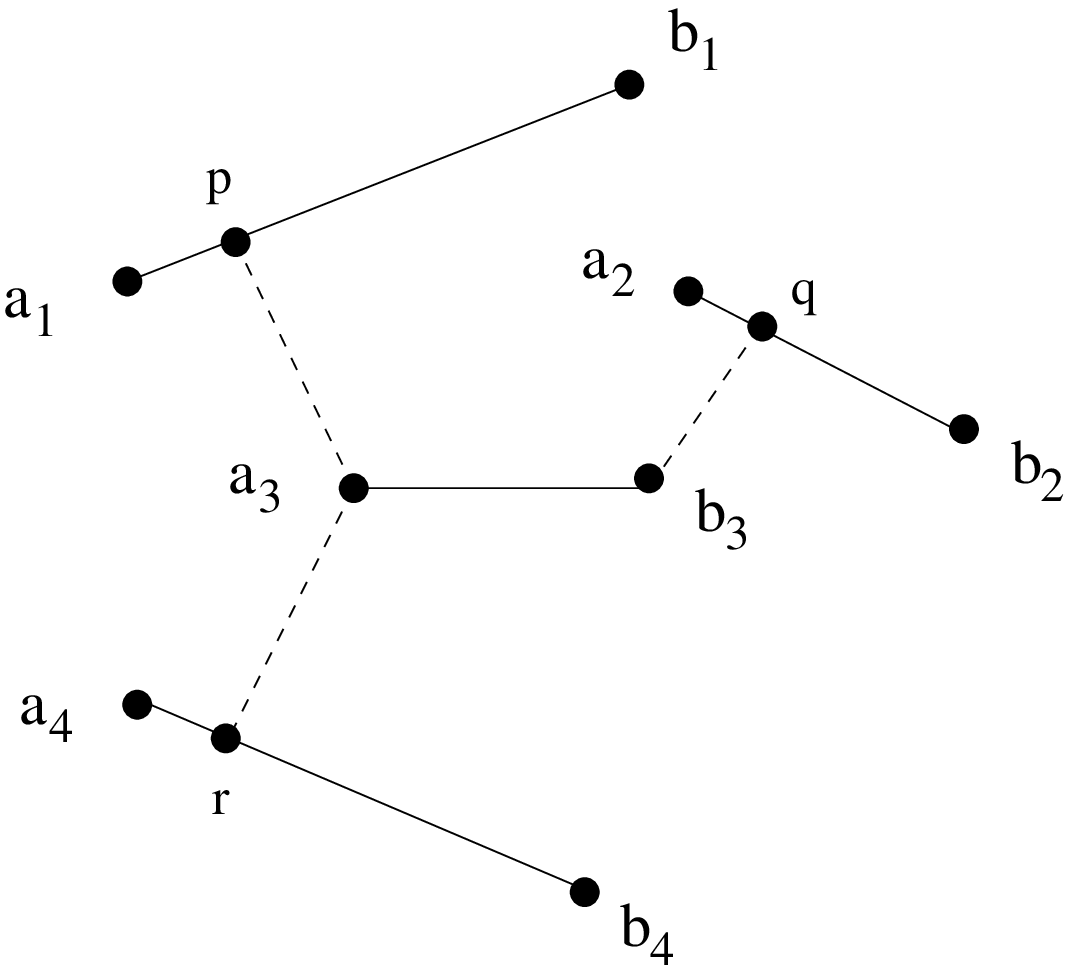}
              \caption{$G_{\cL}$}\label{fig:GL}
          \end{figure}
\end{minipage}

%
%
%
%

The graph $G_{\cL}$ is a tree and the sum of the edge weights of the graph is $| G_{\cL}| $ = $| T | +\sum_{i=1}^{n}{l_i}$,
where $ | T |$ is the sum of the edge weights of $T$. Following Algorithm \ref{alg:pint1} ({\textsc BarrierSweepCoverage}) computes a tour
on $G_\cL$ and finds number of mobile sensors and their movement paths.

\begin{lemma}
According to the Algorithm \ref{alg:pint1} each point on $l_i$ can be visited by at least
one mobile sensor in every time period $t$ for $i=1,2,\cdots n$.
\end{lemma}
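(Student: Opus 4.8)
The plan is to reduce the claim to the already-established single-curve result. The key observation is that the tree $G_\cL$ (which consists of the $n$ line segments together with the connecting shortest-distance segments $s_{ij}$ coming from the MST) can be traversed by a closed walk that uses every edge exactly twice; this is the standard doubling trick for trees, and the resulting closed walk has total length $2|G_\cL| = 2\left(|T| + \sum_{i=1}^n |l_i|\right)$. Concretely, I would have Algorithm~\ref{alg:pint1} fix an Eulerian closed walk $W$ on the multigraph obtained by doubling every edge of $G_\cL$; since every vertex of the doubled multigraph has even degree, such a closed walk exists and visits every point of every $l_i$. The walk $W$ plays the role of the ``closed curve'' in the optimal single-curve strategy.

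Next I would invoke the partitioning idea from the optimal single-curve algorithm: partition $W$ into $\left\lceil |W| / (vt) \right\rceil$ arcs of length at most $vt$, place one mobile sensor at each partition point, and let all sensors move along $W$ in the same direction at speed $v$. The correctness argument is then identical to the one used in the single-curve case (and in the proof of the first theorem of Section~\ref{sec:single}): because the sensors are equally spaced at arc-distance $vt$ along $W$ and move synchronously at the same speed in the same direction, the gap between any point and the next sensor to reach it is at most $vt$, hence at most $t$ in time. Since every point of every $l_i$ lies on $W$, each such point is revisited within every time window of length $t$. That establishes the lemma.

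The step that needs the most care is verifying that $W$ actually contains every point of each $l_i$ as a subset of the plane (not merely each vertex of $G_\cL$). This is where the construction of $G_\cL$ from $T_\cL$ matters: by inserting vertices at the endpoints of the connecting segments, each $l_i$ is subdivided into sub-segments that are edges of $G_\cL$, so doubling and taking an Eulerian walk traverses each of those sub-edges, and their union is exactly $l_i$. One should also note the degenerate case where $|W| < vt$, in which a single sensor suffices. Beyond that, the argument is routine; the main conceptual point is simply that ``tree doubling $+$ equal spacing'' transfers the single-curve coverage guarantee to the whole network $G_\cL$, and in particular to each $l_i$.
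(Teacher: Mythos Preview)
Your proposal is correct and follows essentially the same approach as the paper: form the Eulerian tour $\cE$ on the doubled tree $G_\cL$, space the sensors $vt$ apart along $\cE$, and use the fact that consecutive sensors remain at distance $vt$ to conclude that any point on any $l_i$ is revisited within time $t$. You are in fact more careful than the paper's own proof, since you explicitly verify that every point of each $l_i$ (not just the vertices of $G_\cL$) lies on the walk and you note the degenerate case $|\cE|<vt$.
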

\begin{proof}
Since the mobile sensors are moving along the Eulerian tour $\cE$, each edges of $G_\cL$ are visited.
Let us consider any point $p$ on a line segment $l_i$ and let $t'$ be the
time when a mobile sensor visited $p$ last time.
Now we have to prove that the point $p$ must be visited by at least one mobile sensor in $t'+t$ time.
According the deployment strategy of mobile sensors any two consecutive mobile sensors are within
the distance of $vt$ at any time. So, when a mobile sensor visited $p$ at $t'$ another mobile sensor
is on the way to $p$ and within the distance of $vt$ along $\cE$. Hence $p$ will be again visited by
another mobile sensor within next $t$ time.
\qed\end{proof}
\begin{algorithm}[]
\caption{\textsc{BarrierSweepCoverage}}
\begin{algorithmic}[1]
\STATE{Construct complete weighted graph $G$ from the given set of line segments $\cL$.}
\STATE{Find an MST $T$ of $G$.}
\STATE{Construct $G_{\cL}$.}
\STATE{Find Eulerian graph after doubling each edge of $G_\cL$.}
\STATE{Find an Eulerian tour $\cE$ on the Eulerian graph. Let $|\cE|$ be the length of $\cE$.}
\STATE{Partition $\cE$ into $\left\lceil\frac{|\cE|}{vt}\right\rceil$ parts and deploy $\left\lceil{\frac{|\cE|}{vt}}\right\rceil$
mobile sensors at all partition points, one for each.}
\STATE{Each mobile sensor then starts moving at the same time along $\cE$ in same direction.}
\end{algorithmic}\label{alg:pint1}
\end{algorithm}

\begin{lemma}\label{lem:opt}
If $L_{opt}$ is the length of the optimal TSP tour for visiting all points of every line segment in $\cL$,
then $| T |+\sum_{i=1}^{n}{l_i}\le L_{opt}$.
\end{lemma}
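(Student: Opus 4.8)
The plan is to show that any tour visiting every point of every line segment in $\cL$ must, in particular, contain all of the line segments $l_1,\dots,l_n$ as subsets of the plane that it traverses, and must also "connect" these segments in a way that forces it to spend at least $|T|$ additional length bridging between distinct segments. The quantity $|T|+\sum_{i=1}^n l_i$ is precisely the weight of the tree $G_\cL$, so the claim is that the optimal TSP tour is at least as long as this spanning structure. First I would observe that since the optimal tour $OPT$ visits every point of each $l_i$, the image of $OPT$ in the plane contains $l_i$ entirely for each $i$; hence $OPT$ traverses a set of length at least $\sum_{i=1}^n l_i$ just to cover the segments themselves (being careful that these contributions do not overlap, since distinct line segments are disjoint — or if they are allowed to touch, a standard limiting/general-position argument handles it).

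Next I would argue about the connecting cost. Consider the closed curve $OPT$ and "contract" each line segment $l_i$ to a single point: since $OPT$ is connected and meets every $l_i$, the resulting object is a connected closed walk on the $n$ contracted vertices, so it contains a connected spanning subgraph on $\{v_1,\dots,v_n\}$. Every edge of this walk that does not lie along some $l_i$ corresponds to a portion of $OPT$ travelling from a point on some $l_i$ to a point on some $l_j$; such a portion has length at least $d(l_i,l_j) = w(v_i,v_j)$, the weight of the corresponding edge in the complete graph $G$. Thus the part of $OPT$ lying outside $\bigcup_i l_i$ has total length at least the weight of a connected spanning subgraph of $G$, which is at least the weight $|T|$ of the minimum spanning tree $T$. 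Adding the two disjoint contributions gives $|OPT| = L_{opt} \ge |T| + \sum_{i=1}^n l_i$.

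The main obstacle is making the decomposition of $OPT$ into "along-segment" pieces and "bridging" pieces rigorous, and in particular ruling out double-counting: a single traversal of $OPT$ might weave in and out of several segments, and one must be sure that the length charged to covering the $l_i$'s and the length charged to the spanning tree are genuinely disjoint sub-arcs of $OPT$. I would handle this by parametrising $OPT$ as a closed curve, letting $A$ be the (closed) preimage of $\bigcup_i l_i$ and its complement the bridging part; the bridging part is an at-most-countable union of open arcs, each joining a point of one segment to a point of another (or the same) segment, and one extracts from the pattern of which segments are joined a connected spanning subgraph of $G$ whose weight lower-bounds the total bridging length. The length of $A$ is at least $\sum_i l_i$ because $OPT$ restricted to $A$ surjects onto $\bigcup_i l_i$ and the $l_i$ are disjoint. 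This yields the bound; note the lemma does not even require $OPT$ to be a single closed tour — any connected tour suffices — which is all that is used afterward.
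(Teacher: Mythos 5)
Your proposal is correct and uses essentially the same decomposition as the paper: split $L_{opt}$ into the portion travelled along the segments (of length at least $\sum_{i=1}^{n} l_i$) and the bridging portion between segments, and bound the latter by $|T|$. The only divergence is in how that second bound is justified: the paper lower-bounds the bridging length by the optimal TSP tour $L_G$ on $G$ and then uses $|T|\le L_G$, whereas you contract each segment to a vertex and extract a connected spanning subgraph of $G$ from the bridging arcs, invoking MST minimality directly --- a marginally cleaner finish, since it does not implicitly rely on shortcutting a spanning walk to a Hamiltonian cycle (note the segment-to-segment distances defining $G$ need not satisfy the triangle inequality), and your explicit care about disjointness of the two contributions is a point the paper leaves implicit.
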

\begin{proof}
The optimal TSP tour $L_{opt}$ contains two types of movement paths; movement paths along the line segments of $\cL$
and movement paths between the line segments.
Let total length of the movement paths along the line segments be $L_{along}$.  Since $L_{opt}$ is the optimal tour
for visiting all points of each line segment $l_i \in \cL$, therefore,
\begin{equation}\label{Eq:Lalong}
   L_{along}\ge \sum_{i=1}^{n}{l_i}
\end{equation}
Let $L_G$ be the optimal TSP tour on $G$. Then $ | T | \le L_G $.
Let total length of the movement paths between the line segments be $L_{between}$. Since $L_G$ is the optimal TSP
tour on $G$ and the weights of all edges of $G$ are taken to be the shortest distance between respective line segments,
therefore, \begin{equation}\label{Eq:Lbetween}
   L_{between}\ge L_G
\end{equation}
Now, from equation \ref{Eq:Lalong} and  \ref{Eq:Lbetween}, $L_{G} +\sum_{i=1}^{n}{l_i} \le L_{opt}$.
Hence $| T | + \sum_{i=1}^{n}{l_i}\le L_{opt}$.
\qed\end{proof}
\begin{theorem}
The approximation factor of the Algorithm \ref{alg:pint1} is 2.
\end{theorem}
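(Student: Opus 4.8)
The plan is to bound the number of mobile sensors $N$ produced by Algorithm~\ref{alg:pint1} against the optimal number $N_{opt}$. First I would observe that the Eulerian tour $\cE$ is obtained by doubling every edge of the tree $G_\cL$, so $|\cE| = 2\,|G_\cL| = 2\bigl(|T| + \sum_{i=1}^n l_i\bigr)$. The algorithm deploys $N = \left\lceil \frac{|\cE|}{vt}\right\rceil$ mobile sensors. For the lower bound on the optimum, I would argue that any feasible solution that $t$-barrier sweep covers all the line segments, with the added requirement of this special case that each mobile sensor visits every point of every curve, must have each sensor traversing a closed tour that visits all points of all $l_i$; such a tour has length at least $L_{opt}$, the optimal TSP tour visiting all points of every line segment. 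Since the sweep period is $t$ and speed is $v$, a single sensor covers a tour of length at most $vt$ within one period, so $N_{opt} \ge \frac{L_{opt}}{vt}$.

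Combining these, $\frac{N}{N_{opt}} \le \frac{|\cE|/(vt) + 1}{L_{opt}/(vt)} = \frac{|\cE| + vt}{L_{opt}}$, and then I would invoke Lemma~\ref{lem:opt}, which gives $|T| + \sum_{i=1}^n l_i \le L_{opt}$, hence $|\cE| = 2\bigl(|T| + \sum l_i\bigr) \le 2 L_{opt}$. This yields the ratio $\frac{2 L_{opt} + vt}{L_{opt}} = 2 + \frac{vt}{L_{opt}}$, which is not quite $2$ because of the ceiling term. The standard way to close this gap is to note that we may assume $L_{opt} \ge vt$ (otherwise a single sensor suffices and the problem is trivial, $N_{opt}=1$, and one checks $N$ is also small), or more carefully to handle the rounding so that $N = \left\lceil \frac{|\cE|}{vt} \right\rceil \le \left\lceil \frac{2 L_{opt}}{vt} \right\rceil \le 2\left\lceil \frac{L_{opt}}{vt}\right\rceil \le 2 N_{opt}$, using that $\lceil 2x \rceil \le 2\lceil x \rceil$ for all real $x$.

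So the key steps in order are: (1) compute $|\cE| = 2(|T| + \sum l_i)$ from the doubling construction; (2) establish $N_{opt} \ge L_{opt}/(vt)$ and, from Lemma~\ref{lem:opt}, $|\cE| \le 2 L_{opt}$; (3) chain the inequalities $N = \lceil |\cE|/(vt)\rceil \le \lceil 2 L_{opt}/(vt) \rceil \le 2 \lceil L_{opt}/(vt) \rceil \le 2 N_{opt}$.

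The main obstacle I anticipate is the ceiling/rounding argument in step~(3): naively dividing the two quantities gives $2 + vt/L_{opt}$ rather than exactly $2$, so the proof must exploit $\lceil 2x\rceil \le 2\lceil x\rceil$ together with the fact that $N_{opt}$ is itself an integer lower-bounded by $\lceil L_{opt}/(vt)\rceil$ (any feasible deployment needs at least that many sensors, not merely $L_{opt}/(vt)$ of them). A secondary subtlety worth stating carefully is the justification that in this special case the optimal movement paths really do reduce to a single TSP-type tour of length $\ge L_{opt}$ per sensor, which is what licenses using Lemma~\ref{lem:opt} as the lower bound; this is where the "each mobile sensor visits all points of each curve" hypothesis is essential.
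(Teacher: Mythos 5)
Your proposal matches the paper's own proof in all essentials: both use $|\cE| = 2\bigl(|T| + \sum_{i=1}^n l_i\bigr) \le 2L_{opt}$ via Lemma~\ref{lem:opt}, the integrality bound $N_{opt} \ge \left\lceil \frac{L_{opt}}{vt}\right\rceil$, and the comparison $\left\lceil\frac{2L_{opt}}{vt}\right\rceil \le 2\left\lceil\frac{L_{opt}}{vt}\right\rceil$ (which the paper leaves implicit and you spell out via $\lceil 2x\rceil \le 2\lceil x\rceil$). Your argument is correct and, if anything, slightly more careful about the rounding step than the published proof.
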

\begin{proof}
The total edge weights of $G_\cL$ is $| T |+\sum_{i=1}^{n}{l_i}$.
Now, $|\cE| = 2(| T |+\sum_{i=1}^{n}{l_i})$, since Eulerian tour $\cE$ found by the
Algorithm \ref{alg:pint1} after doubling each edges of $G_\cL$.
By Lemma \ref{lem:opt}, $ |\cE|\le 2L_{opt}$. Let $N_{opt}$ be the number of mobile
sensors required for optimal solution. Then $N_{opt} \times vt \ge L_{opt}$, {\it i.e.},
$N_{opt} \ge \left\lceil \frac{L_{opt}}{vt}\right\rceil$. The number of mobile sensors
calculated by the Algorithm \ref{alg:pint1} is $\left\lceil\frac{|\cE|}{vt}\right\rceil$
(=$N$, say). Therefore, the approximation factor of the Algorithm \ref{alg:pint1} is equal to
$\frac{N}{N_{opt}} \le \left\lceil\frac{2L_{opt}}{vt}\right\rceil\Big/\left\lceil\frac{L_{opt}}{vt}\right\rceil\le 2$.
\qed\end{proof}

\subsection{Solution for BSCMC}
In this section, we propose an algorithm for the BSCMC problem. The description of the algorithm is given below.

For $k=1$ to $n$ following calculations are performed.
Compute a minimum spanning forest $F_k$ of $G$ with $k$ components.
Let $C^1$, $C^2$, $\cdots$, $C^k$ be the connected components of $F_k$.
Let $C^i_{\cL}$ be the representation of $C^i$ on the set of line segments $\cL$ for $i=1$ to $k$,
where $C^i_{\cL}= \{l_j| v_j \in C^i\} \cup\{s_{ij}| (v_i,v_j) \in C^i\}$.

Construct graph $G^i_{\cL}$ from $C^i_{\cL}$ in the same way the $G_\cL$ is constructed from $T_\cL$ in section \ref{sec:FLSC}.
Clearly, $\sum _{i=1}^k {| G^i_{\cL}|}= \sum_{j=1}^n l_j + | F_k |$.

Find Eulerian tours $\cE^1_{\cL}$, $\cE^2_{\cL}$, $\cdots$, $\cE^k_{\cL}$ after doubling the edges of $G^1_{\cL}$,
$G^2_{\cL}$, $\cdots$, $G^k_{\cL}$ respectively. Partition each $\cE^k_{\cL}$ into $\left\lceil\frac{|\cE^k_{\cL}|}{vt}\right\rceil$ parts of length $vt$. Let $N_k$ be the total number of partitioning points. Choose the minimum over all $N_k$'s as the number of mobile sensors. Deploy the number of mobile sensors, one at each of the partitioning points. Then all mobile sensors start their movement at the same time along
their respective tours in the same direction.
\begin{theorem}
The approximation factor of the proposed solution for BSCMC is 5.
\end{theorem}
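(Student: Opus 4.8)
The plan is to bound the number of mobile sensors $N = \min_k N_k$ produced by the algorithm against the optimal number $N_{opt}$ for the BSCMC problem. The key idea is to find a single value of $k$ — namely the number of connected "clusters" induced by an optimal solution — for which $N_k$ is provably within a factor $5$ of $N_{opt}$; since the algorithm takes the minimum over all $k$, this suffices. First I would fix an optimal solution consisting of $N_{opt}$ mobile sensors, each patrolling a closed tour; consider the curves (vertices of $G$) grouped by which tours visit them, and let the resulting partition of $\{l_1,\dots,l_n\}$ have $k^*$ blocks $B_1,\dots,B_{k^*}$ (two curves are in the same block if a common optimal tour, or a chain of tours, touches both). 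Within block $B_r$, the union of the optimal tour pieces touching the curves of $B_r$ is a connected subgraph spanning those curves, so its total length is at least $|F_{k^*}$ restricted to $B_r|$ plus $\sum_{l_j \in B_r} l_j$, by the same MST-lower-bound argument as Lemma \ref{lem:opt}. Summing over $r$ gives $\sum_j l_j + |F_{k^*}| \le$ (total length of all optimal tours) $\le N_{opt}\cdot vt$.

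Next I would relate $N_{k^*}$ to these lengths. Doubling each $G^i_{\cL}$ and taking Eulerian tours gives $\sum_i |\cE^i_{\cL}| = 2\big(\sum_j l_j + |F_{k^*}|\big)$, so the total Euler-tour length is at most $2 N_{opt}\, vt$. The number of sensors is $N_{k^*} = \sum_i \lceil |\cE^i_{\cL}|/(vt)\rceil \le \sum_i \big(|\cE^i_{\cL}|/(vt) + 1\big) = 2\big(\sum_j l_j + |F_{k^*}|\big)/(vt) + k^*$. The first term is at most $2N_{opt}$. For the ceiling overhead $k^*$: each of the $k^*$ blocks is serviced in the optimal solution by at least one mobile sensor (in fact the blocks are vertex-disjoint groups of curves each needing its own coverage), so $k^* \le N_{opt}$. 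That already yields $N \le N_{k^*} \le 3 N_{opt}$ — too strong, which signals I have the constant wrong somewhere; the genuine argument must instead route through the length comparison more carefully, accounting for the fact that an optimal tour is a single closed curve through possibly many blocks, so the $|F_{k^*}|$-type bound loses a factor (a Christofides/doubling-style factor of $2$ on the "between" part) and the per-component ceiling contributes another additive $k^* \le N_{opt}$, combining to $2\cdot 2 + 1 = 5$.

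So the step I expect to be the real obstacle — and the one I would write most carefully — is the choice of $k^*$ and the lower bound tying $|F_{k^*}| + \sum_j l_j$ to $|OPT|$: one must argue that the optimal mobile sensors, whatever their routes, induce a spanning forest on $G$ with at most $k^*$ components whose weight plus the curve lengths is dominated by the optimal total travel length, where the extra factor of $2$ enters because reconstructing a connected "between-curves" subgraph from a union of closed tours is itself a doubling-type operation. Once that inequality, say $|F_{k^*}| + \sum_j l_j \le 2 N_{opt} vt$, is in hand, the rest is the routine chain
\begin{equation*}
N \le N_{k^*} \le \frac{\sum_i |\cE^i_{\cL}|}{vt} + k^* \le \frac{2\big(|F_{k^*}| + \sum_j l_j\big)}{vt} + N_{opt} \le 4 N_{opt} + N_{opt} = 5 N_{opt},
\end{equation*}
using $k^* \le N_{opt}$ and the minimality of $N$ over all $k$. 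I would close by remarking that the factor $5$ is not claimed to be tight and that, as in section \ref{sec:FLSC}, the same construction applies verbatim to finite-length curves in place of line segments.
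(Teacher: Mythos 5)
The skeleton of your final chain is exactly the paper's: fix one value of $k$, use $N \le N_k \le \sum_i |\cE^i_{\cL}|/(vt) + k = 2\bigl(\sum_j l_j + |F_k|\bigr)/(vt) + k$, then bound the forest-plus-lengths term and the additive $k$ against the optimum. Where you differ is the choice of $k$: you take $k^*$, the number of connected blocks that an optimal BSCMC solution induces on the curves, and bound $k^* \le N_{opt}$ by ``at least one sensor per block''; the paper instead takes $k = opt'$, the optimal number of sensors for sweep-covering just the vertices of $G$ (the point-sweep relaxation), notes $opt' \le opt$, bounds $|F_{opt'}| \le Min\_path \le opt' \cdot vt$ using the paths those $opt'$ sensors trace in one period, bounds $\sum_i l_i \le opt\cdot vt$ separately, and assembles $2\,opt + 2\,opt' + opt' \le 5\,opt$. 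Both choices of $k$ are legitimate, and your block-based choice is arguably the more natural lower-bound argument.

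The genuine defect in your write-up is the step you yourself flag and leave unproved: the final chain rests on $|F_{k^*}| + \sum_j l_j \le 2 N_{opt} vt$, and the justification you gesture at (an extra ``doubling-type'' factor of $2$ needed to extract a connected between-curves subgraph from the optimal routes) is not a real phenomenon. No such doubling is needed: in any window of length $t$ the optimal sensors' trajectories must pass through every point of every curve (along-curve length at least $\sum_j l_j$), and within each block the between-curve ``jumps'' between consecutively visited curves already form a connected multigraph on the block's vertices, so their total length dominates a spanning tree of the block and hence, summed over blocks, dominates $|F_{k^*}|$; both pieces are charged to the same travel budget $N_{opt} vt$. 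That is precisely your first-paragraph argument, and it gives the stronger $\sum_j l_j + |F_{k^*}| \le N_{opt} vt$, hence $N \le 3N_{opt}$. Your alarm at obtaining $3$ was therefore misplaced: the paper's constant $5$ comes from its looser bookkeeping (bounding $\sum_i l_i$ and $|F_k|$ separately, each against a multiple of the optimum), and in any case an upper bound of $3$ implies the claimed bound of $5$. If you want to land exactly on $5$, the inequality you left hanging follows immediately by bounding each of $\sum_j l_j$ and $|F_{k^*}|$ separately by $N_{opt} vt$ --- but either way you must actually write that bound down rather than defer it as ``the real obstacle.'' Two smaller points: optimal sensors need not follow closed tours, so define your blocks from the trajectories within a single time window of length $t$ (the chaining argument and $k^* \le N_{opt}$ go through unchanged), and note that $|F_{k^*}|$ is at most the union of per-block spanning trees because the latter is itself a spanning forest of $G$ with $k^*$ components.
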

\begin{proof}
Let $opt$ be the number of mobile sensor required for the optimal solution of BSCMC. Let $opt'$ be the minimum number of mobile sensor which can
guarantee $t$-sweep coverage of all the vertices of $G$. Then $opt \ge opt'$.

Since, all the points of each line segments are visited by the mobile sensors in any time period $t$ then $opt \ge \frac{\sum_{i=1}^{n}{l_i}}{vt}$.
Let us consider the movement paths of $opt'$ number of mobile sensors to sweep cover the vertices of $G$ in any time interval $[t_0,t+t_0]$. Let $Min\_path$ be the total sum of the lengths of the paths.
Then $Min\_path \le vt\times opt'$. Again these $opt'$  movement paths form a spanning forest with $opt'$ number of connected components of $G$.
Therefore, we have $ |  F_{opt'} |   \le Min\_path$ as $F_{opt'}$ is the minimum spanning forest of $G$ with $opt'$ components.
Let us consider the iteration of our solution for $k=opt'$.
Total number of mobile sensors $N$ in this iteration is given below.\\
$N=\sum _{i=1} ^k \left \lceil \frac{|\cE^i_{\cL}|}{vt}\right \rceil \le \sum _{i=1} ^k \frac{|\cE^i_{\cL}|}{vt} + k =
2\sum _{i=1}^k \frac{| G^i_{\cL}|}{vt} + k = 2\frac{\sum _{i=1} ^n l_i}{vt}+2\frac{| F_k|}{vt} + k
\le 2opt+ 2 \frac{Min\_path}{vt}+k \le 2 opt+ 2 opt' + opt' \le 5opt$.

Therefore, the approximation factor of our proposed solution is 5.
\qed\end{proof}

The two algorithms proposed in this section also works for a set of finite length curves as explained below.
Let $\cX$ be a set of finite length curves and $S$ be the set of shortest distance line between
every pair of the curves. The complete weighted graph $G$ can be constructed considering each
curve as a vertex and the distance between every pair of curves as an edge weight corresponding
to the edge. For the MST or any subtree of $G$, Eulerian tour on the tree or the subtree can be form in the same way by introducing vertices vertices at the end points of each curve and the joining line segments and doubling the edges. The number of mobile sensors also can be deployed after partitioning the tours at each of the partitioning points. The mobile sensors follow same movement strategy
as earlier to guarantee sweep coverage of the multiple finite length curves.

\section{Data Gathering by Data Mules}\label{sec:MDMDG}
In this section we consider a data gathering problem by a set of data mules
\cite{Anastasi08,CelikM10,LevinES14,LevinSS10,ShahRJB03}, which we have formulated
as a variation of barrier sweep coverage problem. A set of mobile sensors are
moving along finite length straight lines on a plane for monitoring or
sampling data around it. Movement of the mobile sensors are arbitrary along their
respective paths i.e., a mobile sensor moves in any direction along the straight line
with its arbitrary speed. A set of data mules are moving with uniform speed $v$ in the same
plane for collecting data from the mobile sensors. A data mule can collect data from a mobile
sensor whenever it meets the mobile sensor on its path. We assume data transfer can be done
instantaneously whenever a data mule meets a mobile sensor.
The definition of the problem is given below.

\begin{definition} \rm({\it Minimum number of data mule for data gathering \rm(MDMDG\rm)}\rm)
A set of mobile sensors are moving arbitrarily on a plane along line segments. Find minimum
number of data mules, which are moving with uniform speed $v$, such that data can be collected
from each of the mobile sensors at least once in every $t$ time period.
\end{definition}

The points sweep coverage problem \cite{Li11} is a special instance of the MDMDG when two
end points of every line segment are same, {\it i.e.} the mobile sensors are behaving like
a static sensor. Therefore, MDMDG problem is NP-hard and cannot be approximated within a factor of 2 unless P=NP.

The following Lemma \ref{lem:MDMDG1} shows that to visit the mobile sensors, each point of
all paths must be visited by the data mules.

\begin{lemma}\label{lem:MDMDG1}
To solve the  MDMDG problem, each and every point of all line segments must be visited by the set
of data mules.
\end{lemma}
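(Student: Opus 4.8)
The plan is to prove Lemma \ref{lem:MDMDG1} by contradiction: suppose there is a feasible schedule of data mules in which some point $x$ on some line segment $l_i$ is never visited by any data mule within a window of length $t$. I would exploit the fact that the mobile sensor assigned to $l_i$ moves \emph{arbitrarily} along $l_i$ — in particular, an adversary controlling the mobile sensor can keep it parked exactly at $x$ for the entire sweep period. Since data transfer only happens when a data mule physically meets the mobile sensor on its path, if no data mule ever reaches the point $x$ during $[t_0, t_0 + t]$, then the data mule set fails to collect data from that mobile sensor in that period, contradicting feasibility.

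First I would set up notation: let $x \in l_i$ and suppose, for contradiction, that there is a time $t_0$ such that no data mule occupies $x$ at any moment in $[t_0, t_0+t]$. Then I would construct the adversarial mobile-sensor trajectory that sits at $x$ throughout $[t_0, t_0+t]$ (this is a legal trajectory since the mobile sensor may move in any direction at any speed, including speed zero). Under this trajectory, a data mule collects data from the $l_i$-sensor during $[t_0,t_0+t]$ only if it is at $x$ at some instant of that interval — but by assumption none is, so the sensor goes un-served for a full period $t$, violating the MDMDG requirement. Hence every point of every line segment must be visited by the data mules within every window of length $t$, and in particular every point is visited.

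The only subtlety — and the one place the argument needs care rather than being a one-liner — is making precise that the guarantee ``data collected once every $t$ time period'' must hold \emph{for every admissible behavior of the mobile sensors}, since the mobile sensors' motion is not under the algorithm's control. I would state explicitly that a solution to MDMDG must be robust against all arbitrary trajectories of the mobile sensors; under that (natural) reading, the adversary's ``stay at $x$'' trajectory immediately forces the data mules' collective trajectories to cover $x$. This is the main conceptual point; everything else is immediate. Once Lemma \ref{lem:MDMDG1} is in place, it justifies reducing MDMDG to a barrier-sweep-coverage instance on the set of line segments, which is presumably how the subsequent $3$-approximation algorithm proceeds.
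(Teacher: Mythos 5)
Your proposal is correct and follows essentially the same argument as the paper: argue by contradiction, let the mobile sensor exploit its arbitrary movement by remaining stationary at the unvisited point, and conclude that it would go unvisited for more than a period $t$, violating the coverage requirement. Your explicit remark that feasibility must hold against every admissible sensor trajectory is a welcome clarification, but it is the same underlying idea as the paper's proof.
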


\begin{proof}
We will prove the Lemma by the method of contradiction.  Let $l$ be a line segment for which all points
of $l$ are not visited by the data mules. Therefore, there exist one point $p$ on $l$ such that $p$
is not visited by any data mule. One mobile node can stop its movement for sometime and which can be
allowed for the arbitrary nature of their movements. Now, if the mobile sensor on $l$ remains static
at $p$ for more than $t$ time then the mobile sensor is not visited by any data mule and which
contradict the condition of  $t$ sweep coverage. Hence, each and every point of all line segments
must be visited by the set of data mules.
\qed\end{proof}

Now, we find the minimum path traveled by a single data mule to visit all the mobile sensors.
According to the problem definition and by Lemma \ref{lem:MDMDG1}, within any time interval $[t_0,t_0+t]$,
all points of each line segment are visited by the data mules. Therefore, the total length
of the tour traversed by all the data mules within the time interval is greater or equals to the optimal
tour traversed by a single data mule in order to visit all mobile sensors.
The following Lemma gives the nature of the optimal tour traversed by a single data mule for visiting
all the mobile sensors.

\begin{lemma}\label{lem:MDMDG2}
It may not be possible to visit a mobile sensor by a data mule unless it visits the whole
line segment, which is movement path of the mobile sensor, from one end to the other end continuously.
\end{lemma}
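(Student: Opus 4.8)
The plan is to exhibit a configuration of mobile sensors whose movements force any valid data mule tour to traverse every line segment end-to-end without interruption, thereby showing that partial or ``interrupted'' coverage of a segment cannot suffice in general. First I would fix attention on a single line segment $l$ with endpoints $a$ and $b$, together with a mobile sensor $m$ moving arbitrarily along $l$. I would argue, using the adversarial nature of $m$'s motion (as in the proof of Lemma \ref{lem:MDMDG1}, where a sensor may pause wherever it likes), that for the data mule to be guaranteed a meeting with $m$ within every window of length $t$, the mule's visits to $l$ within any time interval $[t_0, t_0+t]$ must collectively cover all of $arc(ab)=l$; otherwise the adversary parks $m$ at an uncovered point and defeats the schedule.

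Next I would show that covering all of $l$ within a single time window of length $t$, in the presence of an adversarial sensor, actually forces a \emph{continuous} sweep of $l$ from one end to the other. The key observation is that if the mule visits $l$ in two or more separate passes within $[t_0,t_0+t]$ — say it covers a sub-arc, leaves $l$ to service other segments, and returns — then between the two passes there is a time subinterval during which some point $p$ of $l$ has not yet been visited in the current window and the mule is off $l$; the adversary keeps $m$ at such a point for the full duration, and since the mule is absent from a neighbourhood of $p$ for longer than the gap, $m$ is missed over a window exceeding $t$. Hence within the critical window the mule's time on $l$ must be one uninterrupted interval, and during that interval it must progress monotonically from one endpoint to the other, i.e. traverse $l$ continuously from $a$ to $b$ (or $b$ to $a$). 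This establishes that it ``may not be possible'' to collect from $m$ without such a full continuous traversal of $l$.

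The main obstacle I anticipate is making the adversary argument fully rigorous when several mules cooperate and when the mule may revisit $l$ across overlapping windows: one must be careful that the ``uncovered point at a given time'' is well-defined relative to the right reference window, and that the adversary can commit to a stationary position consistently against \emph{all} mules simultaneously. I would handle this by choosing the adversary's target point after observing the (deterministic) schedule of all mules, picking a point and a window where the union of all mule visits to $l$ leaves a temporal gap at that point, which exists precisely when no single continuous end-to-end pass occurs within some length-$t$ window. A secondary, more technical point is ruling out the degenerate case $a=b$ (a static sensor), where the statement is vacuous; I would simply note that the lemma's content is for genuine segments, consistent with the reduction from point sweep coverage mentioned just above.
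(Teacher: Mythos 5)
Your core argument is essentially the paper's: the lemma only asserts that a meeting \emph{may not} be possible without a continuous end-to-end traversal, and the paper proves it with exactly the evasion scenario you describe — the mule covers one part $l'$ while the sensor sits on the other part $l''$, and when the mule later covers $l''$ the sensor has relocated to $l'$, so they never meet. Be aware, though, that your additional claims go beyond what is needed and are partly false: a single uninterrupted on-segment pass need not be \emph{monotone} to guarantee a meeting (a mule entering at an interior point, sweeping to one end and then to the other still catches any sensor by an intermediate-value argument, since the sensor cannot cross the mule without meeting it), and with several cooperating mules a meeting can be forced without any one mule traversing the whole segment (e.g.\ two mules sweeping inward from the two endpoints simultaneously). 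Since the lemma is the weak ``may not be possible'' statement about a single data mule, the simple evasion scenario suffices and the overreaching parts should be dropped.
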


\begin{proof}
If a line segment is not visited continuously then in the following scenario the data mule cannot
visit a mobile sensor. Let $l'$ and $l''$ be the two parts of a line segment $l$. The data mule
continuously visits $l'$ and during that time period the mobile sensor remains on $l''$. After sometime,
when the mobile sensor remains on $l'$, the data mule visits $l''$ continuously. So, in this scenario
the data mule  cannot visit mobile sensor.
\qed\end{proof}

Let $L_{opt}$ be the optimal tour for visiting all mobile sensors by one data mule. Now, the optimal tour
$L_{opt}$ contains two types of movement paths: the movement paths along the line segments and the paths between
pair of line segments. According to Lemma \ref{lem:MDMDG2}, the paths between pair of line segments are the lines which
connect end points of the pair of line segments.

We construct  euclidian complete graph  $G_{2n}$ with $2n$ vertices $a_i,b_i$, $i=1,2, \cdots n$,
where $a_i$ and $b_i$ are the two end points of the line $l_i$. The edge set $E(G_{2n})$ of $G_{2n}$ is given by
$E(G_{2n})= \{l_i :  i=1,2,\cdots, n\} \cup \{(a_i,a_j):  i \ne j\} \cup  \{(b_i,b_j):  i \ne j\}
\cup \{(a_i,b_j):  i \ne j\} \cup \{(b_i,a_j): i \ne j\}$. The weight of each of the edges is equal to
the euclidian distance between the two vertices.

Let $T_{2n}$ be a MST of $G_{2n}$ containing all edges $l_i \in E(T_{2n}), i=1,2,\cdots, n$.
We compute $T_{2n}$ using Kruskal's algorithm after including  all edges $l_i, i=1,2,\cdots,n$ in the
initial edge set of $T_{2n}$. Until the spanning tree is formed we apply Kruskal's algorithm on the
remaining edges, $E(G_{2n})\backslash \{l_i :  i=1,2,\cdots, n\}$ of $G_{2n}$. An Eulerian graph  is
formed from $T_{2n}$ as described in Christofides algorithm \cite{christofides76}.
We compute an Eulerian tour $\cE_{2n}$ from the above Eulerian graph.

We cannot directly apply the movement strategy of the mobile sensors, which is used in Algorithm \ref{alg:Energy} and
Algorithm  \ref{alg:pint1} to give the movement strategy of the data mules to solve the MDMDG problem.
If there exist a line segment with length greater than $vt$ and the mobile sensor moves with a speed $v$ along the line
in the same direction as the data mule moves then it may not be possible by the data mule to meet the mobile sensor within
time $t$. Hence, we apply a new strategy as explained below in order solve the MDMDG problem.

Partition the tour $\cE_{2n}$ into equal parts of length $vt$ and consider two sets of data mules $DM_1$ and $DM_2$,
each of which contains $\lceil\frac{\cE_{2n}}{vt}\rceil$ number of data mules. Deploy two data mules at each of the
partitioning points one from each set. Then each data mule from the set $DM_1$ moves in a direction say, clockwise
direction, whereas other set of data mules $DM_2$ move in the counter clockwise direction. All data mules,
irrespective of the sets  start their movement at same time. Based on the above discussions, we propose following
Algorithm \ref{alg:MDMDG} ({\textsc{MDMDG}}).

\begin{algorithm}
\caption{\textsc{MDMDG}}
\begin{algorithmic}[1]
\STATE{Use Kruskal algorithm to find an MST  $T_{2n}$ of $G_{2n}$ with the initial set of edges containing all edges $l_i$ for $i=1,2,\cdots,n$.}
\STATE{Construct an Eulerian graph  from $T_{2n}$ using Christofides algorithm \cite{christofides76}.}
\STATE{Find an Eulerian tour $\cE_{2n}$ from the  Eulerian graph. Let $|\cE_{2n}|$ be the length of $\cE_{2n}$.}
\STATE{Partition $\cE_{2n}$ into $\left\lceil\frac{|\cE_{2n}|}{vt}\right\rceil$ parts of length  $vt$. Deploy two data
mules, one from $DM_1$ and other from $DM_2$ at each of the partitioning points.}
\STATE{All data mules start moving at the same time along $\cE_{2n}$ such that the data mules from $DM_1$ move in clockwise direction and the data mules from $DM_2$ move in anticlockwise direction.}
\end{algorithmic}\label{alg:MDMDG}
\end{algorithm}
\subsection{Analysis}
\begin{theorem}[Correctness]
According to the Algorithm \ref{alg:MDMDG}, each mobile sensor is visited by a data mule at least once in every $t$ time period.
\end{theorem}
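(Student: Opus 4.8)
The plan is to fix one segment $l_i$ together with its mobile sensor and reduce the claim to a one-dimensional ``pincer'' argument along $\cE_{2n}$. First I would record the structural fact that $\cE_{2n}$ traverses every $l_i$: by construction $T_{2n}$ already contains all the edges $l_1,\dots,l_n$, the Christofides step only \emph{adds} edges, and the data mules walk the whole Eulerian tour, so some sub-arc $A_i$ of $\cE_{2n}$ is a copy of $l_i$. Parametrising $\cE_{2n}$ by arc length, the sensor --- which never leaves $l_i$ --- has a continuous position $\sigma(s)\in A_i$ at time $s$. It therefore suffices to prove: for every time $t_0$ there is an instant $s\in[t_0,t_0+t]$ at which some data mule occupies the point $\sigma(s)$ of $A_i$ (a mule at a tour-point of $A_i$ is physically on $l_i$). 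Granting this, consecutive meetings are at most $t$ apart and, taking $t_0=0$, the first meeting occurs within $t$ of deployment, which is exactly the assertion.

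Next I would exploit the deployment. At every time the $\lceil|\cE_{2n}|/vt\rceil$ mules of $DM_1$ lie on $\cE_{2n}$ with consecutive gaps at most $vt$ and all move clockwise at speed $v$; the $DM_2$ mules likewise, moving anticlockwise. Hence, fixing $t_0$ and writing $p_0:=\sigma(t_0)$, there is a mule $M_1\in DM_1$ that is at most $vt$ behind $p_0$ (with respect to the clockwise orientation), hence heading toward it, and a mule $M_2\in DM_2$ that is at most $vt$ ahead of $p_0$, hence also heading toward it; if either gap is $0$ the sensor is met at $t_0$. Assuming $|\cE_{2n}|>2vt$, the arc $I(s)$ that runs clockwise from $M_1(s)$ through $p_0$ to $M_2(s)$ is a genuine sub-arc of $\cE_{2n}$, of length $\beta+\gamma\le 2vt$ at $t_0$, where $\beta,\gamma$ denote those two gaps.

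The heart of the proof is the shrinking-arc step. As time advances $M_1$ eats into $I(s)$ from its rear at speed $v$ and $M_2$ eats into it from the front at speed $v$, so $|I(s)|=\beta+\gamma-2v(s-t_0)$, which reaches $0$ at $s^\star=t_0+(\beta+\gamma)/(2v)\le t_0+t$. The sensor starts inside $I(t_0)$, and by continuity of $\sigma$ it can leave $I(s)$ only by passing through one of the moving endpoints $M_1(s)$, $M_2(s)$ --- which is exactly a meeting with a mule. So either the sensor is met before $s^\star$, or it remains trapped and at $s^\star$ is squeezed onto the single common point $M_1(s^\star)=M_2(s^\star)$, again a meeting; in both cases it is met at some instant of $[t_0,s^\star]\subseteq[t_0,t_0+t]$, which closes the reduction.

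I expect the main obstacle to be precisely what forces the two-fleet construction and distinguishes this from the correctness proofs of Algorithms~\ref{alg:Energy} and \ref{alg:pint1}: the target is itself mobile and may act adversarially, so no single pursuing mule suffices --- a sensor sliding at speed $v$ with the flow is never overtaken by a one-directional fleet --- and the argument must lean on the opposing fleet together with getting the quantitative geometry right, namely that the pincer closes in time $2vt/(2v)=t$ and no more. A secondary point needing its own, easy remark is the degenerate regime $|\cE_{2n}|\le 2vt$ (at most two mules per direction), where $I(s)$ need not be well defined; that case would be treated separately by a more careful argument that again uses the confinement of the sensor to the fixed arc $A_i$.
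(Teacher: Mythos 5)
Your proposal is correct and rests on essentially the same idea as the paper's proof: the two oppositely-moving fleets, each with consecutive gaps at most $vt$, pincer the sensor so that it is met within time $t$ no matter how it moves along its segment. The paper argues this informally by a case split on whether the sensor stays at its last-visited point or moves clockwise/counter-clockwise, whereas you make the same mechanism rigorous via the shrinking-arc/continuity (intermediate value) argument on the copy of $l_i$ inside $\cE_{2n}$, additionally handling arbitrary back-and-forth motion and the initial visit, and correctly flagging the wrap-around case $|\cE_{2n}|\le 2vt$ as needing a separate (easy) remark.
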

\begin{proof}
Let the position of a mobile sensor be $p$ when it last visited by a data mule at time $t_0$. According the deployment strategy of data mules,
two data mules visit $p$ again within time $t_0+t$, from two different directions. The statement of the theorem follows if the mobile sensor remains static at $p$ till $t_0+t$. Now we consider the case when the mobile sensor moves clockwise direction from $p$ after time $t_0$. In this case there exist a data mule, which is moving in counter clockwise direction visits it within time $t_0+t$. Similarly, if the mobile sensor moves counter clockwise direction from $p$ after time $t_0$, then there exist a data mule, which is moving in clockwise direction visits it within time $t_0+t$.
Hence, irrespective of the nature of movement, a mobile sensor is visited by a data mule at least once in every $t$ time period.
\qed\end{proof}
\begin{theorem}
The approximation factor of the Algorithm \ref{alg:MDMDG} is 3.
\end{theorem}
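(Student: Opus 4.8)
The plan is to bound the number of data mules $N = 2\left\lceil\frac{|\cE_{2n}|}{vt}\right\rceil$ produced by Algorithm~\ref{alg:MDMDG} against the optimal number $opt$, by relating $|\cE_{2n}|$ to the optimal single–mule tour $L_{opt}$ and then relating $L_{opt}$ to $opt$. First I would recall, from the discussion preceding the algorithm together with Lemma~\ref{lem:MDMDG1} and Lemma~\ref{lem:MDMDG2}, that within any time window $[t_0,t_0+t]$ every point of every line segment must be swept by the data mules, so the combined length of all mule paths in that window is at least $L_{opt}$, the length of the optimal tour that visits every line segment from end to end continuously. Since $opt$ mules each traverse at most $vt$ in that window, this gives $opt \cdot vt \ge L_{opt}$, i.e. $opt \ge \left\lceil \frac{L_{opt}}{vt}\right\rceil$.

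Next I would bound $|\cE_{2n}|$ in terms of $L_{opt}$. Here the structure of $L_{opt}$ is the key: by Lemma~\ref{lem:MDMDG2} it decomposes into the $n$ segments $l_i$ (each traversed fully) plus connecting paths joining endpoints of segments, so contracting each $l_i$ to a point turns $L_{opt}$ into a tour on the $2n$ endpoint-set whose connector-part has length $L_{opt} - \sum_i |l_i|$; moreover this connector-structure contains a spanning tree of the endpoint graph once the $l_i$ edges are included, so the weight of our MST $T_{2n}$ (which is forced to contain all $l_i$ and then completed greedily) satisfies $|T_{2n}| \le L_{opt}$. Christofides' construction adds a matching of weight at most $\frac{1}{2}L_{opt}$ on the odd-degree vertices, giving an Eulerian graph of total weight at most $\frac{3}{2}L_{opt}$, hence $|\cE_{2n}| \le \frac{3}{2} L_{opt}$.

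Finally I would combine the pieces. The algorithm deploys $N = 2\left\lceil\frac{|\cE_{2n}|}{vt}\right\rceil$ data mules, and using $|\cE_{2n}| \le \frac32 L_{opt}$ together with $opt \ge \left\lceil\frac{L_{opt}}{vt}\right\rceil \ge \frac{L_{opt}}{vt}$, a short ceiling computation of the form $\left\lceil\frac{(3/2)L_{opt}}{vt}\right\rceil \big/ \left\lceil\frac{L_{opt}}{vt}\right\rceil \le \frac32$ yields $N \le 3\,opt$, which is the claimed factor $3$.

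\textbf{Expected main obstacle.} The delicate step is the claim $|T_{2n}| \le L_{opt}$: one must argue carefully that because $L_{opt}$ traverses every $l_i$ fully and connects the segments only through endpoint-to-endpoint straight lines (Lemma~\ref{lem:MDMDG2}), the edge set $\{l_i\} \cup (\text{connectors of }L_{opt})$ spans $G_{2n}$ and contains a spanning tree of weight $\le L_{opt}$, and that Kruskal's algorithm run with the $l_i$ pre-inserted cannot do worse than that tree. The analogous care is needed for the Christofides matching bound, where the standard shortcutting argument must be checked to be compatible with the forced inclusion of the $l_i$ edges.
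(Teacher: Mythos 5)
Your proposal follows essentially the same route as the paper's proof: lower-bound the optimum via $N_{opt}\cdot vt \ge L_{opt}$ (using Lemmas \ref{lem:MDMDG1} and \ref{lem:MDMDG2}), upper-bound the tour by the Christofides-type estimate $|\cE_{2n}|\le \frac{3}{2}L_{opt}$, and divide. You are in fact more careful than the paper on the one point that deserves care: the paper simply cites Christofides, but $T_{2n}$ is a \emph{constrained} MST forced to contain every $l_i$, so one really does need your argument that the optimal tour, which by Lemma \ref{lem:MDMDG2} consists of full segments plus endpoint-to-endpoint connectors, contains a spanning connected subgraph through all $l_i$ of weight at most $L_{opt}$ (hence $|T_{2n}|\le L_{opt}$), and that the odd-degree matching is still at most $\frac{1}{2}L_{opt}$ by shortcutting $L_{opt}$ over the odd-degree endpoints. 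That part of your plan is sound and makes explicit what the paper leaves implicit.

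The step that does not hold as stated is the final ceiling manipulation: the inequality $\bigl\lceil \frac{(3/2)L_{opt}}{vt}\bigr\rceil \big/ \bigl\lceil \frac{L_{opt}}{vt}\bigr\rceil \le \frac{3}{2}$ is false in general; for $L_{opt}=vt$ the left-hand side equals $\lceil 3/2\rceil / 1 = 2$. With your (accurate) count $N = 2\bigl\lceil |\cE_{2n}|/(vt)\bigr\rceil$, the bounds you have assembled only give $N \le 3N_{opt}+1$ when $\lceil L_{opt}/(vt)\rceil$ is odd, not $N\le 3N_{opt}$. The paper sidesteps this by taking $N=\bigl\lceil 2|\cE_{2n}|/(vt)\bigr\rceil$ and using the valid inequality $\lceil 3x\rceil \le 3\lceil x\rceil$, which yields the clean factor $3$ but slightly undercounts the algorithm's actual deployment of two mules per partition point. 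So to land exactly on the stated factor you must either adopt the paper's accounting of $N$ (and accept that mismatch) or handle the odd-ceiling boundary case separately; as written, this last inequality is the only genuine flaw in your argument.
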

\begin{proof}
According to the Christofides algorithm \cite{christofides76} we can write $|\cE_{2n}| \le \frac{3}{2}L_{opt}$.
Let $N_{opt}$ be the number of data mules required for optimal solution.\\ Then $N_{opt} \times vt \ge L_{opt}$, {\it i.e.}, $N_{opt} \ge \left\lceil \frac{L_{opt}}{vt}\right\rceil$. The number of data mule calculated by the Algorithm \ref{alg:MDMDG} is $\left\lceil\frac{2|\cE_{2n}|}{vt}\right\rceil$ (=$N$, say). Therefore, the approximation factor for the Algorithm \ref{alg:MDMDG} is equal to $\frac{N}{N_{opt}} \le \left\lceil\frac{2 \times\frac{3}{2}L_{opt}}{vt}\right\rceil\Big/\left\lceil\frac{L_{opt}}{vt}\right\rceil\le 3$.
\qed\end{proof}
\section{Simulation Results}
To the best of our knowledge there is no related work on barrier sweep coverage problem in literature.
We compare performance of our proposed Algorithm \ref{alg:pint1} and the Algorithm for BSCMC through simulation.
We implement both of the algorithms in C++ language. A set of line segments are randomly generated inside a square region of side 200 meter.
The length of each line segment is randomly chosen within 5 meter. The velocity of each mobile sensor is taken as 1 meter per second.

\begin{table}[]
\centering
{\small
\begin{tabular}{|c|c|c|}
\hline
No. of line segments & \multicolumn{2}{c|}{Number of Mobile Sensors}\\
\cline{2-3}
& \multicolumn{1}{c|}{Algorithm for BSCMC} &\multicolumn{1}{c|}{Algorithm \ref{alg:pint1}}\\
\hline
5&5&12\\
15&14&28\\
25&21&35\\
35&26&43\\
45&35&54\\
55&42&64\\
65&51&72\\
75&57&77\\
85&66&90\\
95&75&97\\
105&76&100\\
115&80&104\\
125&83&107\\
135&86&110\\
\hline
\end{tabular}
\caption{Average number of mobile sensors to achieve sweep coverage varying with number of line segments for fixed sweep period}\label{tab2}
}
\end{table}

\begin{figure}[]
  \centering
    \includegraphics[width=0.65\textwidth]{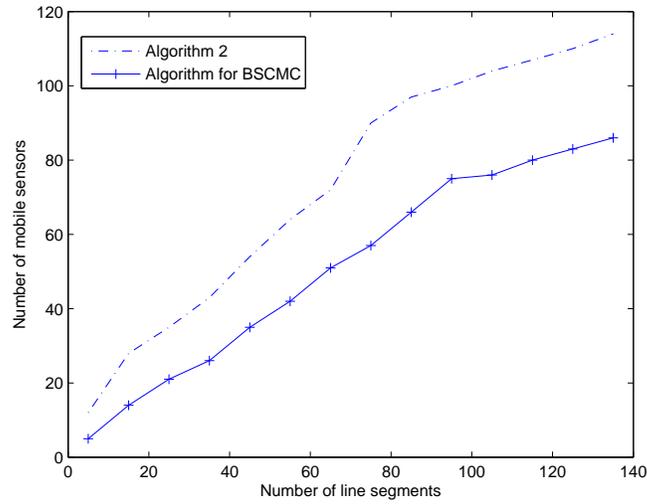}
    \caption{Comparison with respect to  number of mobile sensors varying with number of line segments}\label{fig1:vsn}
\end{figure}
\begin{table}[]
\centering
{\small
\begin{tabular}{|c|c|c|}
\hline
Sweep period & \multicolumn{2}{c|}{Number of Mobile Sensors}\\
\cline{2-3}
& \multicolumn{1}{c|}{Algorithm for BSCMC} &\multicolumn{1}{c|}{Algorithm \ref{alg:pint1}}\\
\hline
50&75&197\\
60&67&179\\
70&62&135\\
80&58&119\\
90&56&107\\
100&54&97\\
110&52&86\\
120&51&77\\
130&51&71\\
140&50&66\\
150&50&60\\
\hline
\end{tabular}
\caption{Comparison of the number of mobile sensors to achieve sweep coverage varying sweep period for fixed number of line segments}\label{tab1}
}
\end{table}

Table \ref{tab2} shows comparison of average number of mobile sensors to achieve sweep coverage for both of the algorithms varying with number of line segments. The average number of mobile sensor is calculated for 100 several executions of the algorithms with fixed sweep period
50 second. A graphical representation of the Table \ref{tab2} is illustrated in Fig. \ref{fig1:vsn}.
Table \ref{tab2} and Fig. \ref{fig1:vsn} show that with increasing number of line segments, the
Algorithm BSCMC performs better than the Algorithm \ref{alg:pint1} with respect to average number of mobile sensors.

\begin{figure}[ht]
  \centering
    \includegraphics[width=0.65\textwidth]{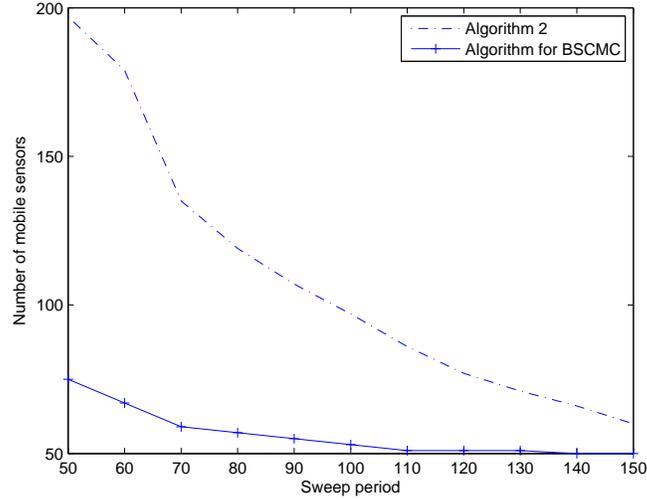}
    \caption{Comparison with respect to  number of mobile sensors varying with sweep period (in second)}\label{fig1:vs-t}
\end{figure}

Table \ref{tab1} shows comparison of average number of mobile sensors to achieve sweep coverage varying with the sweep periods.
The average number of mobile sensor is calculated for 100 several executions of the algorithms with fixed number of line segments,
which is equal to 50. A graphical representation of the Table \ref{tab1} is illustrated in Fig. \ref{fig1:vs-t}. Table \ref{tab1}
and Fig. \ref{fig1:vs-t} show that with increasing number of line segments, the difference between the average number of mobile
sensors decreases. In general the Algorithm for BSCMC performs better than the Algorithm \ref{alg:pint1}.

\section{Conclusion}\label{sec:concl}
Unlike traditional coverage, in sweep coverage periodic monitoring is
maintained by mobile sensors instead of continuous monitoring. There are many
applications in industry, where periodic monitoring is required for identification
of specific preventive maintenance. For example, periodic monitoring of electrical
equipments like motors and generators is required to check their partial discharges
\cite{Paoletti99}.

In this paper we have introduced sweep coverage concept for barriers, where the objective is to
cover finite length curves on a plane. In barrier sweep coverage, mobile sensors periodically visit
all points of a set of finite length curves. For a single curve, we have solved the problem optimally.
To resolve the issue of limited battery power of mobile sensors, we have proposed a solution by introducing
an energy source on the plane and proposed a solution, which achieves a constant approximation factor $\frac{13}{3}$.
We have proved that finding minimum number of mobile sensors to sweep cover a set of finite length curves is NP-hard and
cannot be approximated within a factor of 2. For this problem we have proposed a 2-approximation algorithm for a
special case, which is the best possible approximation factor. For the general problem, we propose a 5-approximation algorithm.
As an application of barrier sweep coverage problems, we have defined a data gathering problem with data mules,
where the concept of barrier sweep coverage is applied for gathering data by utilizing minimum number of data mules.
A 3-approximation algorithm is proposed to solve the problem. In future we want to investigate the sweep coverage
problems in presence of obstacles. There would be another possible extension of this work a plane, where the surface
might be uneven.
\bibliographystyle{plain}
\bibliography{bib1}

\end{document}